\def\doi{9(1:16)2013}
\newcommand{\Nat}{\ensuremath \mathbb N}
\newcommand{\abs}[1]{\left\vert#1\right\vert}
\newcommand{\set}[1]{\left\{#1\right\}}
\newcommand {\cd}{\cdot}
\DeclareMathAlphabet{\mathitbf}{OML}{cmm}{b}{it}
\font\sf=cmss10
\newcommand{\Nats}{{\hbox{\sf I\kern-.13em\hbox{N}}}}   
\newcommand{\Reals}{{\hbox{\sf I\kern-.14em\hbox{R}}}}  
\newcommand{\Ints}{{\hbox{\sf Z\kern-.43emZ}}}          
\newcommand{\CC}{{\hbox{\sf C\kern -.48emC}}}           
\newcommand{\QQ}{{\hbox{\sf C\kern -.48emQ}}}           
\renewcommand{\And}{\land}
\newcommand{\dotminus}{\mathbin{\setbox0\hbox{$-$}\setbox2\hbox to\wd0{\hss$^{\mkern1mu\cdot}$\hss}\wd2=0pt\box2\box0}}
\newtheorem{theorem}{Theorem}[section]
\newtheorem{proposition}[theorem]{Proposition}
\newtheorem{corollary}[theorem]{Corollary}
\newtheorem{definition}{Definition}[section]
\newenvironment{proof}{\QuadSpace\par\noindent{\bf
Proof}:}{\EndProof\HalfSpace} \fi
\newcommand{\QuadSpace}{}
\newcommand{\HalfSpace}{}
\newcommand{\EndProof}{ \hfill \vrule width 1ex height 1ex depth 0pt }
\def\RL0{{\mbox{\rm R(lin)}}}
\def\RZ0{{\mbox{\rm R$^0$(lin)}}}
\def\RC0{R(lin) with constant coefficients}
\def\RCD0#1{{\mbox{\rm R$_{#1}$(lin)}}}
\def\Tse0{{\mbox{$\neg$\textsc{Tseitin}$_{G,p}$}}}
\definecolor{bluetxt}{rgb}{0,0,.5}
\definecolor{myred}{rgb}{0.6,0.0,0.1}
\definecolor{greentxt}{rgb}{0,.5,0}
\definecolor{redtxt}{rgb}{0.1,0.1,0.65}
\definecolor{purpletxt}{rgb}{0.6,0.1,0.7}
\definecolor{black}{rgb}{.0,.0,.0}
\definecolor{verydarkblue}{rgb}{.0,.0,.2}
\definecolor{lightgray}{rgb}{.7,.7,.7}
\definecolor{bgcolor}{rgb}{.8,.8,.5}
\definecolor{lightkhaki}{rgb}{0.945,.946,.355}
\newenvironment{proof}{

\smallskip
\noindent\emph{Proof.}}{\hfill\(\Box\)
\bigskip
} \fi
\newlength{\defbaselineskip}
\newcommand{\Comment}[1]{}
\newcommand {\mar}[1]{}
\renewcommand{\footnote}[1]{}
\title{Polylogarithmic Cuts in Models of \VZ}
\author[S.~M\"uller]{Sebastian M\"uller}
\address{Faculty of Mathematics and Physics \\
        Charles University,  Prague}
\email{muller@karlin.mff.cuni.cz}
\thanks{Supported by the Marie Curie Initial
    Training Network in Mathematical Logic -  MALOA - From MAthematical LOgic to Applications, PITN-GA-2009-238381}
\keywords{Proof Complexity, Bounded Arithmetic, Cuts, Subexponential Simulation}
\subjclass{F.4.1}
\begin{document}

\begin{abstract}
  We study initial cuts of models of weak two-sorted Bounded
  Arithmetics with respect to the strength of their theories and show
  that these theories are stronger than the original one. More
  explicitly we will see that polylogarithmic cuts of models of $\VZ$
  are models of $\VNC^1$ by formalizing a proof of Nepomnjascij's
  Theorem in such cuts.  This is a strengthening of a result by Paris
  and Wilkie.

  We can then exploit our result in Proof Complexity to observe that
  Frege proof systems can be sub exponentially simulated by bounded
  depth Frege proof systems. This result has recently been obtained by
  Filmus, Pitassi and Santhanam in a direct proof. As an
  interesting observation we also obtain an average case separation of
  Resolution from $\mathsf{AC}^0$-Frege by applying a recent result
  with Tzameret.
\end{abstract}

\maketitle

\section{Introduction}

This article is on the one hand on models of weak arithmetics and on the other on proof complexity, i.e. the
question of how long formal proofs of tautologies have to be in given proof systems. Therefore the introduction
will consist of two parts, one for each subject.

Models of weak arithmetics, like $I\Delta_0$, have been extensively studied for several reasons. They are
possibly the simplest objects whose theories bear enough strength to do a good part of mathematics in, yet they
are weak enough to allow for a certain kind of constructiveness. The latter has been demonstrated over and over
again by various results connecting weak arithmetic theories with complexity classes and computability. We are
interested in the strength of the theory obtained by restricting our objects of reasoning to a small initial
part of a given model. Since a two-sorted theory, such as \VZ, is much stronger on its number part than on its
set part, it is likely that such a cut is a model of a supposedly much stronger theory. Indeed we will see in
Section~\ref{Sec:Polylog Cuts} that certain cuts of models of \VZ\ are models of the provably stronger theory
$\VNC^1$. This strengthens a result by Paris and Wilkie \cite{PW87}\cite{PW85}, who show that such cuts are
models of $\mathbf{VTC}^0$. In fact they work in a more general setting and, following our argumentation, their
result readily implies the sub exponential simulation of $\mathsf{TC}^0$-Frege by $\mathsf{AC}^0$-Frege from
Bonet, Domingo, Gavald\`a, Maciel, and Pitassi \cite{BDGMP04}.

Proof Complexity, on the other hand, more or less began when Cook and Reckhow \cite{CR79} discovered the close
connection between the lengths of formal proofs in propositional proof systems and standard complexity classes.
This connection yields a possibility of dealing with the $\mathsf{coNP/NP}$ question by asking, whether there
exists a propositional proof system that is polynomially bounded. We will not directly address this question
here, but rather explore the relative strengths of two major proof systems, Frege and bounded depth Frege. These
proof systems have been extensively studied, due to their natural appearance as classical calculi, such as
Gentzen's PK, and it is well known that Frege systems are stronger than bounded depth Frege systems, as the
former system has polynomial size proofs for the Pigeonhole Principle (see \cite{Bu87}), while the latter does
not (see \cite{KPW95} and \cite{PBI93}). Lately, Filmus, Pitassi and Santhanam \cite{FPS11} have proved a sub
exponential simulation of Frege by bounded depth Frege using a combinatoric argument. In Section~\ref{Sec:Impl
Proof Cmpl} we will obtain the same result by an application of our result about cuts to the provability of the
Reflection Principle for Frege in bounded depth Frege. Currently Cook, Ghasemloo and Nguyen \cite{CGN12} are
working on a purely syntactical proof that gives a slightly better result with respect to the strength of the
simulated proof system.

The paper is built-up as follows. In section \ref{Sec:Preliminaries} we briefly recapture some basics about
Complexity Theory, Bounded Arithmetic, Proof Complexity and the various connections between them. As this is
only expository it might be helpful to consult some of the references for a more detailed introduction (see
\cite{AB09}, \cite{CN10} and \cite{Kra95}). After that, in Section~\ref{Sec:Polylog Cuts} we prove a
formalization of Nepomnjascij's Theorem in the polylogarithmic cut of a model of $\VZ$. Using a standard
algorithm for evaluating circuits and then applying the formalized version of Nepomnjascij's Theorem we can
conclude that this cut is indeed a model of $\VNC^1$. Finally, in Section~\ref{Sec:Impl Proof Cmpl}, we apply
this result to prove that a version of the Bounded Reflection Principle of Frege is provable in $\VZ$. This,
together with a standard argument linking the provability of Reflection Principles with simulation results,
yields the sub exponential simulation of Frege by bounded depth Frege.

\section{Preliminaries}\label{Sec:Preliminaries}

We assume familiarity with Turing machines, circuits and standard complexity classes such as $\mathsf{P}$,
$\mathsf{NP}$, $\mathsf{TimeSpace}(f,g)$, $\mathsf{NC}^i$, $\mathsf{AC}^i$ and so on. See for example
\cite{AB09} for an introduction. We will not work a lot within these classes, but rather apply known relations
between such classes and weak arithmetic theories.

We will work in a two-sorted arithmetic setting, having one sort of variables representing numbers and the
second sort representing bounded sets of numbers. We identify such bounded subsets with strings. See \cite{CN10}
for a thorough introduction. The underlying language, denoted \LTwoA, consists of the following relation,
function and constant symbols:\[ \set{+,\cd,\le, 0,1,\abs{\cdot},=_1,=_2,\in} \,.\]

An  \LTwoA-structure $M$ consists of a first-sort universe $U_1^M$ of numbers and a second-sort universe $U_2^M$
of bounded subsets of numbers.  If $M$ is a model of the two-sorted theory $\VZ$ (see \ref{sec VZ}), then the
functions $+$ and $\cd$ are the addition and multiplication on the universe of numbers. $0$ and $1$ are
interpreted as the appropriate elements zero and one with respect to addition and multiplication. The relation
$\le$ is an ordering relation on the first-sort universe. The function $\abs{\cdot }$ maps an element of the set
sort to its largest element plus one (i.e. to an element of the number sort). The relation $=_1$ is interpreted
as equality between numbers, $=_2$ is interpreted as equality between bounded sets of numbers. The relation
$\in$ holds for a number $n$ and a set of numbers $N$ if and only if $n$ is an element of $N$. The standard
model of two-sorted Peano Arithmetic will be denoted as $\Nat_2$. It consists of a first-sort universe
$U_1=\Nat$ and a second-sort universe $U_2$ of all finite subsets of $\Nat$. The symbols are interpreted in the
usual way.

We denote the first-sort (number) variables by lower-case letters $ x,y,z,\dots,\alpha,\beta,\gamma,\dots$, and the second-sort (set)
variables by capital letters $ X,Y,Z,\dots,A,B,\Gamma,\dots$. In case it helps to describe the meaning of a variable we will use lower case words for first-sort and words starting with a capital letter for second-sort variables. We can build formulas in the usual way, using two sorts of
quantifiers, number quantifiers and string quantifiers. A number quantifier $\exists x$ ($\forall x$) is bounded
if it is of the form $\exists x (x\leq f\land\dots)$ ($\forall x (x\leq f\rightarrow\dots)$) for some number
term $f$. A string quantifier $\exists X$ ($\forall X$) is bounded if it is of the form $\exists X (\abs{X}\leq
f\land\dots)$ ($\forall X (\abs{X}\leq f\rightarrow\dots)$) for some number term $f$. A formula is bounded iff
all its quantifiers are. All formulas in this paper will be bounded. A formula $\varphi $ is in $\Sigma^B_0$ (or
$\Pi^B_{0}$) if it uses no string quantifiers and all number quantifiers are bounded. A formula $\varphi $ is a
$\Sigma^B_{i+1}$ (or $\Pi^B_{i+1}$) if it is of the form $\exists X_1\leq p(n)\dots\exists X_m\leq p(n) \psi$
(or $\forall X_1\leq p(n)\dots\forall X_m\leq p(n) \psi$), where $\psi\in\Pi^B_i$ (or $\psi\in\Sigma^B_i$,
respectively). If a relation or predicate can be defined by both a $\Sigma^B_i$ and a $\Pi^B_i$ formula, then we
call it $\Delta^B_i$ definable. The {\em depth} of a formula is the maximal number of alternations of its
logical connectives and quantifiers.

As mentioned before we will represent a bounded set of numbers $N$ by a finite string $S_N=S^0_N\dots
S^{\abs{N}-1}_N$ such that $S^i_N=1$ if and only if $i\in N$. We will abuse notation and identify bounded sets
and strings, i.e. $N$ and $S_N$.

Further, we will encode monotone propositional formulas inductively as binary trees in the standard way, giving
a node the value $1$ if it corresponds to a conjunction and the value $0$, if it corresponds to a disjunction.
Binary trees are encoded as strings as follows. If position $x$ contains the value of a node $\mathfrak{n}_x$,
then the value of its left successor is contained in position $2x$, while the value of its right successor is in
$2x+1$.

\subsection{Elements of Proof Complexity}\label{Sec:Elements Proof Cmpl}

We restate some basic definitions introduced in \cite{CR79}.

\begin{definition}\label{def pps}
  A {\em propositional proof system (pps)} is a surjective polynomial-time function
  $P:\{0,1\}^*\longrightarrow\mathsf{TAUT}$, where $\mathsf{TAUT}$ is the set of propositional tautologies (in
  some natural encoding). A string $\pi$ with $P(\pi)=\tau$ is called a {\em $P$-proof} of $\tau$.
\end{definition}

We can define a quasi ordering on the class of all pps as follows.

\begin{definition}\label{def:simulation}
  Let $P,Q$ be propositional proof systems.
  \begin{iteMize}{$\bullet$}
    \item $P$ simulates $Q$ (in symbols $P\geq Q$), iff there is a polynomial $p$, such that for all $\tau\in\mathsf{TAUT}$ there is a
    $\pi_P$ with $P(\pi_P)=\tau$, such that for all $\pi_Q$ with $Q(\pi_Q)=\tau$, $\abs{\pi_P}\leq
    p(\abs{\pi_Q})$.
    \item If there is a polynomial time machine that takes $Q$-proofs and produces $P$-proofs for the same
    formula we say that
    $P$ p-simulates $Q$ (in symbols $P\geq_p Q$).
    \item If $P$ and $Q$ mutually (p-)simulate each other, we say that they are (p-)equivalent (in symbols
    $P\equiv Q$ and $P\equiv_p Q$, respectively).
  \end{iteMize}\smallskip
\end{definition}

\noindent In this article we will be mainly interested in bounded depth Frege systems and some of their extensions. A
Frege system is a typical textbook proof system, such as Gentzen's propositional calculus PK. We will only
sketch a single rule of such a system as an example and refer the interested reader to standard logic textbooks.

   \begin{prooftree}
     \AxiomC{$ \Gamma\longrightarrow A,\Delta$}
     \AxiomC{$ \Gamma,A\longrightarrow \Delta$}
     \RightLabel{(Cut)}
     \BinaryInfC{$ \Gamma\longrightarrow\Delta$}
   \end{prooftree}\smallskip
Here, $\Delta$ and $\Gamma$ are sets of formulas while $A$ is a formula. $\Gamma\longrightarrow\Delta$ is read
as "The conjunction of all formulas in $\Gamma$ implies the disjunction of all formulas in $\Delta$". The Cut
Rule therefore says that, if $\Gamma$ implies $A$ or $\Delta$, and $\Gamma$ and $A$ imply $\Delta$, then
$\Gamma$ already implies $\Delta$. The formula $A$ is called the {\em Cut Formula}.

In a bounded depth Frege system the depths of all formulas in a derivation are bounded by some global constant.
This is equivalent to being representable by an $\mathsf{AC}^0$ circuit. Thus we also call bounded depth Frege
$\mathsf{AC}^0$-Frege. If the formulas are unbounded, we speak of $\mathsf{NC}^1$-Frege or simply of Frege. We
readily get

\begin{fact}
  $\mathsf{AC}^0 \mbox{-Frege} \leq_p \mbox{Frege}$.
\end{fact}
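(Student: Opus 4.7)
The plan is to observe that this fact is essentially a definitional triviality, so the simulation can be taken to be the identity function. A bounded depth Frege system has exactly the same axioms and inference rules as an unbounded Frege system; the only difference is the extra global syntactic restriction that every formula appearing in a derivation has depth at most some fixed constant $d$. Consequently, every $\mathsf{AC}^0$-Frege proof already is a Frege proof of the same endsequent, with identical length.

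Concretely, I would exhibit the polynomial-time translation promised by Definition~\ref{def:simulation} as the identity map: given an $\mathsf{AC}^0$-Frege proof $\pi$ of a tautology $\tau$, output $\pi$ itself. Each inference step of $\pi$ is, by definition, a valid Frege step; the Frege verifier therefore accepts $\pi$ as a proof of $\tau$. Since $\abs{\pi}$ is preserved, the polynomial bound is witnessed by $p(n)=n$, and the translation is computable in linear time, which establishes $\mathsf{AC}^0\text{-Frege}\leq_p\mbox{Frege}$.

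The only step that requires even token attention is ensuring that the formula-encoding conventions for the two systems agree; if one has fixed, e.g., the convention that $\mathsf{AC}^0$-Frege uses unbounded fan-in $\land,\lor$ while Frege uses binary connectives, then one replaces each unbounded connective in $\pi$ by a balanced binary tree of the corresponding binary connective and inserts a constant number of associativity steps. Since the depth of every formula in $\pi$ is bounded by a constant $d$, this rewriting causes at most a linear blow-up in proof length, and the simulation remains polynomial. No genuine obstacle arises, which is precisely why the statement is recorded here as a \emph{fact} rather than a theorem; its role is to fix the trivial direction so that the interesting content of the paper lies in the reverse, sub exponential simulation.
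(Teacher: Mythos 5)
Your proposal is correct and matches the paper's (implicit) treatment: the paper gives no proof because, as you observe, an $\mathsf{AC}^0$-Frege proof is syntactically already a Frege proof, so the identity map witnesses the p-simulation. Your remark on reconciling encoding conventions for unbounded versus binary connectives is a reasonable sanity check, though under the paper's setup, where bounded depth Frege is simply Frege with a global depth restriction, even that step is unnecessary.
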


A pps $P$ is {\em polynomially bounded} iff there is a polynomial $p$ such that every tautology $\tau$ has a
$P$-proof $\pi$ with $\abs{\pi}\leq p(\abs{\tau})$.

We are interested in the existence of polynomially bounded pps. This is, at least in part, due to the following
theorem.

\begin{fact}[\cite{CR79}]
  $\mathsf{NP=coNP} \Leftrightarrow \mbox{There exists a polynomially bounded pps.}$
\end{fact}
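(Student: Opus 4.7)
The plan is to prove the two implications separately. Both directions hinge on the $\mathsf{coNP}$-completeness of $\mathsf{TAUT}$ and on the polynomial-time verification property built into Definition~\ref{def pps}.

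For the direction ($\Leftarrow$), I would assume that there is a polynomially bounded pps $P$ with associated polynomial $p$. To show $\mathsf{TAUT} \in \mathsf{NP}$, I describe a nondeterministic polynomial-time procedure: on input $\tau$, guess a string $\pi$ of length at most $p(|\tau|)$, then accept iff $P(\pi) = \tau$. The guess is polynomially bounded in $|\tau|$ by the definition of polynomially bounded, the verification $P(\pi) = \tau$ runs in polynomial time because $P$ is a polynomial-time function, and by surjectivity of $P$ onto $\mathsf{TAUT}$ every tautology admits such a witness. Hence $\mathsf{TAUT} \in \mathsf{NP}$, and since $\mathsf{TAUT}$ is $\mathsf{coNP}$-complete under polynomial-time reductions, $\mathsf{NP} = \mathsf{coNP}$ follows.

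For the direction ($\Rightarrow$), assume $\mathsf{NP} = \mathsf{coNP}$. Then $\mathsf{TAUT}$, being in $\mathsf{coNP}$, is in $\mathsf{NP}$, so there exist a polynomial-time verifier $V$ and a polynomial $q$ with the property that $\tau \in \mathsf{TAUT}$ iff there is $w$ with $|w| \le q(|\tau|)$ and $V(\tau, w) = 1$. Fix some tautology $\top_0$ (say $x \lor \neg x$). I would define the candidate pps
\[
    P(\pi) = \begin{cases} \tau & \text{if } \pi = \langle \tau, w\rangle \text{ and } V(\tau, w) = 1,\\ \top_0 & \text{otherwise},\end{cases}
\]
using some standard polynomial-time pairing $\langle \cdot, \cdot\rangle$. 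One then checks that $P$ is polynomial-time computable and surjective onto $\mathsf{TAUT}$, so it is a pps in the sense of Definition~\ref{def pps}; and for every $\tau \in \mathsf{TAUT}$ there is a witness $w$ of length at most $q(|\tau|)$, so $\langle \tau, w\rangle$ is a $P$-proof of $\tau$ of size polynomial in $|\tau|$. Thus $P$ is polynomially bounded.

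Neither direction presents a real obstacle; the only minor care needed is in the $(\Rightarrow)$ construction, where I must verify that $P$ is total, polynomial-time and surjective, which is why a default value on malformed inputs is included. No auxiliary machinery beyond the classical $\mathsf{coNP}$-completeness of $\mathsf{TAUT}$ and the definition of pps is required.
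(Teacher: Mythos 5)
Your argument is correct and is precisely the standard Cook--Reckhow proof; the paper itself states this as a Fact cited to \cite{CR79} without reproving it, so there is no in-paper proof to compare against. One cosmetic point worth tightening in the $(\Rightarrow)$ direction: the definition of $P$ should also check $|w|\leq q(|\tau|)$ (or one should assume without loss of generality that $V$ rejects overlong witnesses), since the NP-verifier characterization only guarantees the equivalence ``$\tau\in\mathsf{TAUT}$ iff $\exists w,\ |w|\leq q(|\tau|),\ V(\tau,w)=1$'' for witnesses of bounded length; without this restriction the range of $P$ is not obviously contained in $\mathsf{TAUT}$.
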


An easier task than searching for a polynomially bounded pps might be to find some pps with sub exponential
bounds to the lengths of proofs. This corresponds to the question, whether sub exponential time nondeterministic
Turing machines can compute $\mathsf{coNP}$-complete languages. To explore the existence of such systems we
generalize Definition~\ref{def:simulation}.

\begin{definition}
    Let $P,Q$ be propositional proof systems and $F$ a family of increasing functions on $\Nat$.
  \begin{iteMize}{$\bullet$}
    \item $P$ $F$-simulates $Q$ (in symbols $P\geq^F Q$), iff there is a function $f\in F$, such that for all $\tau\in\mathsf{TAUT}$ there is a
    $\pi_P$ with $P(\pi_P)=\tau$, such that for all $\pi_Q$ with $Q(\pi_Q)=\tau$, $\abs{\pi_P}\leq
    f(\abs{\pi_Q})$.
    \item If there is an $\mathsf{Time}(F)$-machine that that takes $Q$-proofs and produces $P$-proofs for the same
    formula we say that
    $P$ $F$-computably simulates $Q$ (in symbols $P\geq^F_p Q$).
    \item If $P$ and $Q$ mutually $F$-(computably) simulate each other, we say that they are $F$-(computably) equivalent (in symbols
    $P\equiv^F Q$ and $P\equiv^F_p Q$, respectively).
  \end{iteMize}\smallskip
\end{definition}

\noindent We say a pps $P$ {\em sub exponentially simulates} a pps $Q$ iff the above $F$ can be chosen as a class of
$2^{n^{o(1)}}$ functions.

\subsection{The theory \VZ}\label{sec VZ}
The base theory we will be working with is $\VZ$. It consists of the following axioms:
\begin{center}
\framebox[\textwidth]{
\parbox{335pt}{
\begin{align*}
& \textbf{Basic 1}.\  x+1\neq 0     & \textbf{Basic 2}.\ x+1=y+1\rightarrow
x=y \\
& \textbf{Basic 3}.\   x+0=x        & \textbf{Basic 4}.\ x+(y+1)=(x+y)+1
\\
& \textbf{Basic 5}.\   x\cdot 0=0   & \textbf{Basic 6}.\ x\cdot(y+1)=(x\cdot
y)+x \\
& \textbf{Basic 7}.\   (x\leq y \land y\leq x)\rightarrow x=y
                                & \textbf{Basic 8}.\   x\leq x+y \\
& \textbf{Basic 9}.\   0\leq x      & \textbf{Basic 10}.\   x\leq y\vee
y\leq x    \\
& \textbf{Basic 11}.\   x\leq y\leftrightarrow x<y+1
                                & \textbf{Basic 12}.\   x\neq
                                0\rightarrow\exists
                                y\leq  x(y+1=x)
                                                                    \\
& \textbf{L1}.\  X(y)\rightarrow y<\abs{X}
                                & \textbf{L2}.\  y+1=\abs{X}\rightarrow
                                X(y)
\end{align*}
\begin{equation*}\begin{split}
\mbox{\bf{SE}. }(\abs{X}=\abs{Y}\land \forall i\leq\abs{X}
    (X(i)\leftrightarrow Y(i)))\rightarrow X=Y\\
\mbox{{\bf $\Sigma^B_0$-COMP.\ }}   \exists X\leq y\forall z<y (X(z)\leftrightarrow \varphi (z))\,,\quad
\mbox{for all}\ \varphi \in\Sigma^B_0\,.
\end{split}
\end{equation*}
   }
  }
\end{center}
Here, the Axioms {\bf Basic 1} through {\bf Basic 12} are the usual axioms used to define Peano Arithmetic
without induction ($\mathsf{PA^-}$), which settle the basic properties of Addition, Multiplication, Ordering,
and of the constants 0 and 1. The Axiom {\bf L1} says that the length of a string coding a finite set is an
upper bound to the size of its elements. {\bf L2} says that $\abs{X}$ gives the largest element of $X$ plus $ 1
$. {\bf SE} is the extensionality axiom for strings which states that two strings are equal if they code the
same sets. Finally, {\bf $\Sigma^B_0$-COMP} is the comprehension axiom schema for $\Sigma^B_0$-formulas (it is
an axiom for each such formula) and implies the existence of all sets, which contain exactly the elements that
fulfill any given $\Sigma^B_0$ property.

\begin{fact}
The theory \VZ\ proves the Induction Axiom schema for $\Sigma^B_0$ formulas $ \Phi $:
\[
    (\Phi(0)\And \forall x(\Phi(x)\rightarrow \Phi(x+1)) ) \rightarrow \forall z\,\Phi(z).
\]
\end{fact}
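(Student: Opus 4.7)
The plan is to derive $\Sigma^B_0$-IND by contradiction, using $\Sigma^B_0$-COMP together with the length axioms \textbf{L1} and \textbf{L2} to extract an explicit witness to a failure of the induction step.

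First I would fix an arbitrary $\Sigma^B_0$ formula $\Phi(x)$, assume $\Phi(0)$ and $\forall x (\Phi(x)\to\Phi(x+1))$, and suppose for contradiction that $\neg\Phi(z)$ holds for some $z$. Applying $\Sigma^B_0$-COMP to the formula $\Phi(i)$ with bound $z+1$ yields a set $W$ with $\abs{W}\le z+1$ satisfying
\[
    \forall i<z+1\,(W(i)\leftrightarrow\Phi(i)).
\]

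Next I would identify the ``last success'' of $\Phi$ recorded in $W$. Since $\Phi(0)$ gives $W(0)$, axiom \textbf{L1} forces $\abs{W}>0$, so by \textbf{Basic 12} there is some $k$ with $k+1=\abs{W}$. Axiom \textbf{L2} then gives $W(k)$, hence $\Phi(k)$. On the other hand $W(z)\leftrightarrow\Phi(z)$ gives $\neg W(z)$, while $k<\abs{W}\le z+1$ yields $k\le z$; combining $k\in W$ with $z\notin W$ forces $k<z$, so $k+1<z+1$ and therefore $W(k+1)\leftrightarrow\Phi(k+1)$. Now \textbf{L1} applied to $W(k+1)$ would force $k+1<\abs{W}=k+1$, a contradiction; hence $\neg W(k+1)$ and so $\neg\Phi(k+1)$, contradicting the induction step applied at $k$.

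The subtle point is that we cannot directly invoke a least-element principle, since that is essentially what is being proved. The trick is to use \textbf{L1}/\textbf{L2} to pick out the \emph{largest} element of the comprehended set of ``successes'' and then step one index further into a failure; everything else reduces to routine manipulation with the basic axioms and the defining properties of $\abs{\cdot}$.
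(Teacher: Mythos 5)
Your proof is correct, and it is essentially the standard derivation that the paper is implicitly invoking by stating this as a known fact (cf.\ Cook--Nguyen \cite{CN10}, which the paper cites for its background on $\VZ$ and which proves this via the same device). The key move—apply $\Sigma^B_0$-COMP to form the set of indices below $z$ where $\Phi$ holds, then use \textbf{Basic 12}, \textbf{L2}, and \textbf{L1} to locate its largest element $k$ and step to a failure at $k+1$—is exactly the textbook argument; the only cosmetic difference from the usual presentation is that you go for the maximum directly rather than routing through a minimization principle, which is a perfectly equivalent way to package the same idea.
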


When speaking about theories we will always assume that the theories are two-sorted theories as in \cite{CN10}.

%
The following is a basic notion:
\begin{definition}[Two-sorted definability]\label{def:Two-sorted definabilty}
Let $\mathcal T $ be a theory over the language $ \mathcal L \supseteq \LTwoA$ and let $ \Phi $ be a set of
formulas in the language $ \mathcal L $. A number function $f$ is $\Phi $-definable in a theory $\mathcal T$ iff
there is a formula $ \varphi(\vec x,  y,\vec X)$ in $ \Phi $ such that $\mathcal T$ proves
  \begin{equation*}
    \forall\vec x\forall\vec X\exists!y\varphi(\vec x, y,\vec X)
  \end{equation*}
  and it holds that
  \begin{equation}
    \label{eq:def axiom num}
    y=f(\vec x,\vec X)\leftrightarrow \varphi(\vec x,y,\vec X).
  \end{equation}
A string function $F$ is $\Phi $-definable in a theory $\mathcal T$ iff there is a formula $\varphi(\vec
  x,\vec X,Y)$ in $ \Phi $ such that $\mathcal T$ proves
  \begin{equation*}
    \forall\vec x\forall\vec X\exists!Y\varphi(\vec x,\vec X,Y)
  \end{equation*}
  and it holds that
  \begin{equation}
    \label{eq:def axiom str}
    Y=F(\vec x,\vec X)\leftrightarrow \varphi(\vec x,\vec X,Y).
  \end{equation}
Finally, a relation $R(\vec x,\vec X)$ is $\Phi $-definable iff there is a formula $\varphi(\vec x,\vec X)$ in $
\Phi $ such that it holds that
  \begin{equation}
    \label{eq:def axiom rel}
    R(\vec x,\vec X)\leftrightarrow \varphi(\vec x,\vec X).
  \end{equation}
\end{definition}

Moreover we wish to talk about sequences coded by strings or numbers. For a string $X$ we let $X[i]$ be the
$i$th bit of $X$. Assuming a tupling function $\langle \cdot,\dots,\cdot\rangle$ we can also talk of $k$-ary
relations for any constant $k$. We refer to $X[\langle i_0,\dots ,i_{k-1} \rangle]$, to say that the objects
$i_0,\dots,i_{k-1}$ are in the relation $X$ (which is equivalent to saying that the predicate $X$ holds for the
number $\langle i_0,\dots,i_{k-1}\rangle$, i.e. that the $X$ contains that number as an element). For the sake
of simplicity we also refer to $X[\langle i_0,\dots ,i_k \rangle]$ by $X[i_0,\dots ,i_k]$.

Using $k$-ary relations we can also encode sequences of bounded numbers $x_0,\dots,x_m$ by $x_i=X[\langle
i,0\rangle]X[\langle i,1\rangle]\dots X[\langle i,k\rangle]$ in binary. Matrices and so on can obviously be
formalized in the same way.

Given a string $X[\langle x_1,\dots, x_k\rangle]$ representing a $k$-ary relation, we denote the $k-\ell$-ary
substring with parameters $a_{i_1},\dots, a_{i_{\ell}}$ by $X[\langle
\cdot,\dots,\cdot,a_{i_1},\cdot,\dots,a_{i_{\ell}},\cdot,\dots,\cdot\rangle]$.
For example we refer to the element $a_{ij}$ of a given matrix $A[\langle
x_1,x_2,x_3\rangle]$ as $A[\langle i,j,\cdot\rangle]$, a string representing $a_{ij}$ in binary. Observe that
this substring can be $\Sigma^B_0$ defined in $\VZ$.

Given a number $x$ we denote by $\langle x\rangle_j$ the $j$th number in the  sequence encoded by $x$. To do
this we assume a fixed $\Sigma^B_0$ definable encoding of numbers that is injective. The sequence itself will be
addressed as $\langle x\rangle$. As above, we can also talk about matrices, etc. in this way, i.e. read such a
sequence as a sequence of $k$-tuples.

We want to identify strings of short length with sequences of numbers. Thus, given a string $X$ of length $O(n)$
we can $\Sigma^B_0$-define (in $\VZ$) a number $x\leq 2^{O(n)}$ that codes a sequence $\langle x\rangle$, such
that $X[i]=\langle x\rangle_i$ for all $i<\abs{X}$ and vice versa. We will use $\langle x\rangle\approx X$ and
$\langle x\rangle_i\approx X[i]$ to denote the above identification. Observe that $n$ has to be very small in
order to be able to do the above in $\VZ$.

\subsubsection{Computations in models of $\VZ$}\label{Secsub:Comp in VZ}

Given a polynomially bounded Turing machine $A$ in a binary encoding, we can $\Sigma^B_1$ define a predicate
$\mathsf{ACC}_A(X)$, that states that $X$ is accepted by $A$. This can readily be observed, since, provided some
machine $A$, there is a constant number of states $\sigma_1,\dots, \sigma_k$ and the whole computation can be
written into a matrix $W$ of polynomial size. That $W$ is indeed a correct computation can then be easily
checked, because the computations are only local.

More precisely let $A=\langle\sigma_1,\dots,\sigma_k; \delta\rangle$ be given, where the $\sigma_i$ are
different states, with $\sigma_1$ being the initial state and $\sigma_k$ being the accepting state and $\delta$
is the transition function with domain $\{\sigma_1,\dots,\sigma_k\}\times\{0,1\}$ and range
$\{\sigma_1,\dots,\sigma_k\}\times\{0,1\}\times\{\leftarrow,\downarrow,\rightarrow\}$, which describes what the
machine does. I.e. if $\delta(a,b)=(c,d,e)$, then if the machine is in state $a$ and reads $b$, it replaces $b$
by $d$, goes into state $c$ and moves one position on the tape in the direction $e$. For our formalization we
will assume a function $\delta:\Nat^2\rightarrow\Nat$ and interpret it in the following way,
$\delta(\sigma_a,b)=(\langle\delta(\sigma_a,b)\rangle_1,\langle\delta(\sigma_a,b)\rangle_2,
\langle\delta(\sigma_a,b)\rangle_3)$, where we identify $\downarrow=0,\leftarrow=1,\rightarrow=2$.

Let the polynomial $p$ bound the running time of $A$, then we can formalize $\mathsf{ACC}_A(X)$ as follows
\begin{equation}\label{Def ACC}
  \begin{split}
    \exists W\leq (p&(\abs{X})^2\cdot\log (k)^2)\forall i,i'\leq p(\abs{X})
    \forall 0<\alpha\leq k(\\ &
    i<\abs{X}\rightarrow(\langle W[\langle 0,i,\cdot \rangle]\rangle_1 =X[i]\land i> 0\rightarrow \langle
    W[\langle 0,i,\cdot \rangle]\rangle_2 = 0 \land \langle W[\langle 0,0,\cdot
    \rangle]\rangle_2=1)\land\\ & i\geq\abs{X}\rightarrow(\langle W[\langle 0,i,\cdot \rangle]\rangle_1 =0 \land
    \langle W[\langle 0,i,\cdot\rangle]\rangle_2=0)\land\\ & \langle W[\langle j,i,\cdot \rangle]\rangle_2
    =0\rightarrow (\langle W[\langle j+1,i,\cdot \rangle]\rangle_1=
    \langle W[\langle j,i,\cdot \rangle]\rangle_1)\land\\ &
    \langle W[\langle j,i,\cdot \rangle]\rangle_2
    =\alpha\rightarrow (\langle W[\langle j+1,i,\cdot \rangle]\rangle_1=
    \langle\delta(\alpha,\langle W[\langle j,i,\cdot \rangle]\rangle_1)\rangle_2\land\\ &
    \hspace{0.2cm}(\langle\delta(\alpha,\langle W[\langle j,i,\cdot \rangle]\rangle_1)\rangle_3=0 \rightarrow
    \langle W[\langle j+1,i,\cdot \rangle]\rangle_2=
    \langle\delta(\alpha,\langle W[\langle j,i,\cdot \rangle]\rangle_1)\rangle_1)\land\\ &
    \hspace{0.2cm}(\langle\delta(\alpha,\langle W[\langle j,i,\cdot \rangle]\rangle_1)\rangle_3=1 \rightarrow
    \langle W[\langle j+1,i\dotminus 1,\cdot \rangle]\rangle_2=
    \langle\delta(\alpha,\langle W[\langle j,i,\cdot \rangle]\rangle_1)\rangle_1)\land\\ &
    \hspace{0.2cm}(\langle\delta(\alpha,\langle W[\langle j,i,\cdot \rangle]\rangle_1)\rangle_3=2 \rightarrow
    \langle W[\langle j+1,i+1,\cdot \rangle]\rangle_2=
    \langle\delta(\alpha,\langle W[\langle j,i,\cdot \rangle]\rangle_1)\rangle_1))\land\\ &
    (i\neq i'\rightarrow (\langle W[\langle j,i,\cdot \rangle]\rangle_2>0\rightarrow
    \langle W[\langle j,i',\cdot \rangle]\rangle_2=0))\land
    W[\langle p(\abs X),i,\cdot \rangle]\rangle_2 = k).
  \end{split}
\end{equation}

\noindent Thus, in plain English, $\mathsf{ACC}_A(X)$ says that there exists a matrix $W$ of pairs of numbers that
witnesses an accepting computation of $A$. Here, $\langle W[\langle i,j,\cdot \rangle]\rangle$ is supposed to
code the $j$th cell on the Turing machine's tape after $i$ steps of computations on input $X$. As noted above,
$\langle W[\langle i,j,\cdot \rangle]\rangle_1$ is a binary number, which is the value of the cell, $\langle
W[\langle i,j,\cdot \rangle]\rangle_2$ is a number coding the state the machine is in iff the pointer is on that
cell.

The second and third line of the definition say that the tape in the initial step contains $X$ padded with
zeroes in the end to get the proper length $(p(\abs{X}))$ and that the read/write head is in its starting state
and position. The fourth line says that if the read/write head is not on cell $i$, then nothing happens to the
content of cell $i$. The fifth line says that the content of the cell, where the read/write head is in step $j$,
is changed according to $\delta$. The next three lines tell us where the read/write head moves. The last line
says that there is at most one position on the tape where the read/write head may be at any step and that the
state after the last step is accepting.

We also define a $\Sigma^B_1$-predicate $\mathsf{REACH}_A(Y,Y')$ that says that $A$ reaches configuration $Y'$
from configuration $Y$ in at most $p(\abs{Y})$ many steps. This is essentially the same predicate as
$\mathsf{ACC}$, with the constraints on the initial and accepting state lifted and instead a constraint added
that the first line of computation is $Y$ and the last is $Y'$. We omit the details as it does not severely
differ from the above definition of $\mathsf{ACC}$.


\subsection{Extensions of \VZ}\label{Sec:Ext of VZ}

The Theory \VZ\ serves as our base theory to describe complexity classes by arithmetical means.

The problem, whether a given monotone formula $\varphi$ of size $\ell$ and depth $\lceil\log(\ell)\rceil$ is
satisfiable under a given assignment $I$ is $\mathsf{AC}^0$-complete for $\mathsf{NC}^1$. Therefore Cook and
Nguyen (\cite{CN10}) define the class $\VNC^1$ as $\VZ$ augmented by the axiom $MFV\equiv \exists Y\leq
2a+1.\delta_{MFV}(a,G,I,Y)$, where
\begin{equation*}
 \begin{split}
    \delta_{MFV}(a,G,I,Y)\equiv \forall x<a(&(Y(x+a)\leftrightarrow I(x))\land Y(0)\land\\
    0<x\rightarrow(Y(x)\leftrightarrow(&(G(x)\land Y(2x)\land
    Y(2x+1))\vee\\& (\neg G(x)\land (Y(2x)\vee Y(2x+1)))))).
  \end{split}
\end{equation*}
So, $MFV$ states that there is an evaluation $Y$ of the monotone formula represented by $G$ under the assignment
given by $I$ of length at most $2a+1$. More specifically, $G$ is a tree-encoding of the formula, where $G(x)$ is
true, if node $x$ is $\land$ and false, if $x$ is $\vee$. The evaluation $Y$ takes the value of the variables
given by $I$ and then evaluates the formula in a bottom-up fashion using a standard tree encoding. Thus, the
value of the formula can be read at $Y(1)$.

It is interesting to observe that $MFV$ does not hold in $\VZ$. This is, since an application of the Witnessing
Theorem for $\VZ$ to a proof of $MFV$ would yield an $\mathsf{AC}^0$-definition of satisfaction for monotone
$\mathsf{NC}^1$ circuits. This implies that monotone $\mathsf{NC}^1\subseteq\mathsf{AC}^0$, which is known to be
false.

\subsection{Relation between Arithmetic Theories and Proof Systems}\label{Sec:Rel BA pps}

In this section we will remind the reader of a connection between the Theory $\VZ$ and some of its extensions
and certain propositional proof systems (see also \cite{CN10}\cite{Kra95}).

\begin{definition}The following predicates will be subsequently used. They are definable with respect to $\VZ$
 (see \cite{Kra95}).
  \begin{iteMize}{$\bullet$}
    \item $\mathsf{Fla}(X)$ is a $\Sigma^B_0$ formula that says that the the string $X$ codes a
    formula.
    \item $Z\models X$ is the $\Delta^B_1$ definable property that the truth assignment $Z$ satisfies the formula $X$.
    \item $\mathsf{Taut}(X)$ is the $\Pi^B_1$ formula $\mathsf{Fla}(X)\land \forall Z\leq t(\abs{X}) Z\models
    X$, where $t$ is an upper bound to the number of variables in formulas coded by strings of length $\abs{X}$.
    \item $\mathsf{Prf}_{F_d}(\Pi,A)$ is a $\Sigma^B_0$ definable predicate meaning $\Pi$ is a
    depth $d$ Frege proof for $A$.
    \item $\mathsf{Prf}_{F}(\Pi,A)$ is a $\Sigma^B_0$ definable predicate meaning $\Pi$ is a Frege
    proof for $A$.
  \end{iteMize}
\end{definition}

The following holds

\begin{fact}[see \cite{CN10}]\label{fact:VZ and bd Frege}
  The Theory $\VZ$ proves that $\mathsf{AC}^0$-Frege is sound, i.e. for every $d$ $$\forall A\forall\Pi
  \mathsf{Prf}_{F_d}(\Pi,A)\rightarrow \mathsf{Taut}(A).$$
\end{fact}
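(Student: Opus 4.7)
The plan is to fix the depth $d$ at the meta-level (since the statement is an external universal quantifier over $d$) and prove soundness of depth-$d$ Frege inside $\VZ$ by induction on the length of the derivation. The core observation is that evaluating a depth-$d$ formula under a given assignment is an $\mathsf{AC}^0$ operation: concretely, it can be written as a $\Sigma^B_0$ predicate $\mathsf{SAT}_d(F,Z)$ built by unwinding the tree of $F$ to a fixed nesting depth of $d$ bounded number quantifiers. Because $d$ is a fixed constant, this hard-coded unwinding yields a legitimate $\Sigma^B_0$ formula.

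First I would formalize $\mathsf{SAT}_d(F,Z)$ for the formula encodings fixed in the preliminaries. Using $\Sigma^B_0$-COMP one then obtains a string $V_F$ containing the truth values of every subformula of $F$ under $Z$; this gives the $\Delta^B_1$ definition of $Z\models F$ in the sense the paper uses. Next, given $\mathsf{Prf}_{F_d}(\Pi,A)$ and an arbitrary assignment $Z$ bounded by the variable-count of $\Pi$, I apply $\Sigma^B_0$-COMP again to produce a string $E$ with $E[i]=1$ iff the $i$-th line of $\Pi$ evaluates to true under $Z$. Finally I apply the $\Sigma^B_0$ induction principle available in $\VZ$ to the statement $\forall i\le|\Pi|\,(E[i]=1)$, which at $i=|\Pi|$ yields $Z\models A$; since $Z$ was arbitrary, $\mathsf{Taut}(A)$ follows.

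The induction step amounts to showing that every Frege rule is locally sound, i.e.\ that if the $E[\cdot]$ values at the premise positions are $1$, then the value at the conclusion is $1$. Because $\Pi$ is depth-$d$, every formula occurring in any rule application is depth-$d$, so the soundness of the rule reduces to a Boolean tautology about finitely many truth values $E[\cdot]$, which is a $\Sigma^B_0$ statement that $\VZ$ proves by straightforward case analysis on the finitely many rule schemes of the chosen Frege system (e.g.\ cut, weakening, the standard connective introduction/elimination rules).

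The main technical obstacle is the careful bookkeeping in formalizing $\mathsf{SAT}_d$ as genuinely $\Sigma^B_0$: one has to hard-code the $d$ nested quantifiers corresponding to descending the subformula tree, and to verify that the Frege axiom schemes involve only depth-$d$ formulas so that the soundness check at each step remains bounded. This is why the argument stays within $\VZ$ for each fixed $d$ but does not internalize to a single $\VZ$-proof uniform in $d$; lifting to full (unbounded-depth) Frege is exactly the motivation for the stronger theory $\VNC^1$ and for the cut-based simulation developed in the remainder of the paper.
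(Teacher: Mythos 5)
The paper does not supply a proof of this fact; it is quoted from Cook and Nguyen \cite{CN10}. Your sketch is essentially the standard argument given there, and it is correct: the key observations are that for each fixed external $d$ the depth-$d$ satisfaction predicate unrolls into a genuine $\Sigma^B_0$ formula (a constant nesting of bounded number quantifiers over children, one per level, with unbounded fan-in handled by the bounded quantifier range), and that $\Sigma^B_0$-COMP plus $\Sigma^B_0$-IND along the lines of $\Pi$ then carry the induction, with the inductive step reduced to the $\Sigma^B_0$-provable local soundness of each (constantly many) rule scheme. Two minor remarks: the intermediate string $V_F$ of subformula values is not needed once you have the direct $\Sigma^B_0$ definition of $\mathsf{SAT}_d$, so it can be dropped; and in the $\Sigma^B_0$-IND step you should take the induction formula to be $\Phi(x)\equiv\forall i\le x\,E[i]=1$ (strong form), since the local soundness of line $i$ needs access to the truth of all earlier premise lines, not just line $i-1$. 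You also correctly flag why the argument fails to uniformize over $d$: the quantifier depth of $\mathsf{SAT}_d$ grows with $d$, which is precisely what pushes unbounded-depth Frege soundness up to $\VNC^1$ and motivates the cut construction in the rest of the paper.
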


\begin{fact}[see \cite{CN10}]\label{fact:VNC1 and NC1 Frege}
  The Theory $\VNC^1$ proves that Frege is sound, i.e. $$\forall A\forall\Pi
  \mathsf{Prf}_{F}(\Pi,A)\rightarrow \mathsf{Taut}(A).$$
\end{fact}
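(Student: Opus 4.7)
The plan is to exploit the fact that $\VNC^1$ is designed to capture $\mathsf{NC}^1$-reasoning, and in particular that the $MFV$ axiom provides a total formula evaluation function, which is the $\mathsf{NC}^1$-complete problem underlying the soundness of Frege. The overall strategy mirrors the more familiar proof that $\VZ$ proves soundness of $\mathsf{AC}^0$-Frege (Fact \ref{fact:VZ and bd Frege}), but now we must handle formulas of unbounded depth, which is precisely what $MFV$ buys us.

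First, I would lift $MFV$ from monotone formulas to arbitrary propositional formulas. Given a string $A$ coding a Frege formula and an assignment $Z$, one first $\Sigma^B_0$-defines the negation-normal-form rewrite $A^-$, where negations are pushed to the leaves via De Morgan's laws; this at most doubles the size and depth and can be carried out as an $\mathsf{AC}^0$ transformation, hence is $\Sigma^B_0$-definable in $\VZ$. Simultaneously, one $\Sigma^B_0$-defines an augmented assignment $Z^-$ on literals (introducing a fresh variable for each negated literal, valued as the complement of $Z$). Then $A^-$ is a monotone formula whose value under $Z^-$ equals the value of $A$ under $Z$. Applying $MFV$ to $A^-$ and $Z^-$ yields a string $Y$ computing the evaluation. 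In this way one obtains a $\Sigma^B_1$-defined (and, by uniqueness of $Y$, also $\Pi^B_1$-, hence $\Delta^B_1$-defined) relation $\mathrm{Eval}(A,Z,v)$ meaning ``formula $A$ evaluates to $v$ under assignment $Z$'', provably total in $\VNC^1$.

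Next, fixing a purported proof $\Pi$ of $A$ and an arbitrary assignment $Z$ of size at least $t(\abs{A})$, I would argue by induction on the line number $i \le \abs{\Pi}$ that every formula $A_i$ appearing in $\Pi$ satisfies $\mathrm{Eval}(A_i,Z,1)$. The induction formula is $\Sigma^B_0$ in the (now total) function $\mathrm{Eval}$, so the required number induction is available. For the step, the $\Sigma^B_0$-predicate $\mathsf{Prf}_{F}(\Pi,A)$ guarantees that $A_i$ is either an axiom instance or follows from earlier lines by a Frege rule; in each case soundness is a local condition on constantly many subformulas, which reduces (via $\mathrm{Eval}$) to a finite truth-table check. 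Conclude that $\mathrm{Eval}(A,Z,1)$ for every $Z$, i.e.\ $\mathsf{Taut}(A)$.

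The main obstacle is that naive formalization would attempt to build a global evaluation table for all subformulas of all lines of $\Pi$ at once, which is not an $\mathsf{NC}^1$-object in general. The point of the argument is that one may evaluate formulas one at a time via $MFV$, so that the only global induction is number induction on $i$, with a $\Delta^B_1$ induction formula. Getting the quantifier complexity of $\mathrm{Eval}$ down to $\Delta^B_1$ in $\VNC^1$ (which in turn rests on the provable uniqueness of the evaluation witness supplied by $MFV$) is the delicate point; once this is in place the inductive verification of soundness of each individual Frege rule is routine, and is carried out in detail in \cite{CN10}.
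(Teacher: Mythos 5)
Your overall skeleton --- define a total evaluation relation from $MFV$, show it is $\Delta^B_1$, then do number induction on the line number of the proof, checking each rule locally --- is exactly the route taken in \cite{CN10}, and the remarks about why the induction formula stays in the right class and why one need not build one giant evaluation table are sensible. However, there is a genuine gap in the step where you produce $\mathrm{Eval}$ from $MFV$, and it is precisely the step that carries the real content.

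The $MFV$ axiom, as set up here and in \cite{CN10}, evaluates monotone formulas presented in the \emph{heap} tree encoding: node $x$ has children $2x$ and $2x+1$, and the inputs sit at positions $a,\dots,2a-1$. This only represents trees of depth $O(\log a)$, i.e.\ \emph{balanced} formulas. Pushing negations to the leaves (your $A^-$, $Z^-$ construction) removes the non-monotonicity at the cost of a constant factor, but it does \emph{not} reduce the depth. A line of a Frege proof is an arbitrary propositional formula and may have depth linear in its size; encoding such an $A^-$ in the heap format needed by $MFV$ would cost $2^{\Theta(\mathrm{depth}(A))}$, which is not polynomial. So ``apply $MFV$ to $A^-$'' is not yet available. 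What is actually needed --- and what \cite{CN10} supplies --- is a formalization, inside $\VNC^1$, of a Spira/Brent-type balancing: one first proves that every formula can be restructured into an equivalent formula of logarithmic depth, and only then applies $MFV$ to the balanced monotone form. This balancing step is where the strength of $\VNC^1$ beyond $\VZ$ is really spent, and your sketch omits it entirely.

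A secondary, related point: once $\mathrm{Eval}$ is built through balancing, the ``finite truth-table check'' for each Frege rule is not quite a truth-table check. One must prove the Tarski-style commutation conditions, e.g.\ $\mathrm{Eval}(A\to B,Z,1)\leftrightarrow(\mathrm{Eval}(A,Z,1)\to\mathrm{Eval}(B,Z,1))$, and these are not immediate because the balanced form of $A\to B$ is not built from the balanced forms of $A$ and $B$ in any syntactically transparent way. Proving these commutation lemmas (and the uniqueness you invoke to get $\Delta^B_1$) is the delicate part and should be flagged as such rather than folded into ``routine''.
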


On the other hand, provability of the universal closure of $\Sigma^B_0$ formulas in $\VZ$ and $\VNC^1$ implies
the existence of polynomial size proofs of their propositional translations in $\mathsf{AC}^0$-Frege and Frege,
respectively.

The propositional translation $\llbracket\varphi(\bar x,\bar X)\rrbracket_{\bar m,\bar n}$ of a $\Sigma^B_0$
formula $\varphi(\bar x,\bar X)$ is a family of propositional formulas built up inductively (on the logical
depth) as follows. If $\varphi$ is atomic and does not contain second sort variables, we evaluate $\varphi$ in
$\Nat_2$, if it contains second sort variables, we have to introduce propositional variables. If $\varphi$ is a
boolean combination of formulas $\psi_i$ of lower depth, the translation is simply the same boolean combination
of the translations of the $\psi_i$. If $\varphi$ is $\exists\psi$ or $\forall\psi$ we translate it to the
disjunction or conjunction of the translations, respectively. For a proper definition see \cite{CN10}.

\begin{fact}
  \label{fact:polynomial simulation}
  There exists a polynomial $p$ such that for all $\Sigma^B_0$ formulas $\varphi(\bar x,\bar X)$ the
  following holds
  \begin{iteMize}{$\bullet$}
    \item If $\VZ\vdash\forall\bar X\forall \bar x\varphi(\bar x,\bar X)$, then there exist bounded depth Frege
    proofs of all $\llbracket\varphi\rrbracket_{\bar m,\bar n}$ of length at most $p(\max(\bar m,\bar n))$, for
    any $\bar m,\bar n$.
    \item If $\VNC^1\vdash\forall\bar X\forall \bar x\varphi(\bar x,\bar X)$, then there exist Frege
    proofs of all $\llbracket\varphi\rrbracket_{\bar m,\bar n}$ of length at most $p(\max(\bar m,\bar n))$, for
    any $\bar m,\bar n$.
  \end{iteMize}
  These proofs are effective in the sense that for any such $\varphi$ there exists a polynomial-time computable
  function $F_\varphi$ that maps any tuple $(\bar m,\bar n)$ to the above proofs of $\llbracket\varphi\rrbracket_{\bar m,\bar
  n}$.
\end{fact}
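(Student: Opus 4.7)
The plan is to proceed by induction on a sequent-calculus derivation of $\forall \bar x\, \forall \bar X\, \varphi(\bar x,\bar X)$ in $\VZ$ (respectively $\VNC^1$), and to translate each sequent that appears in it, for each choice of numerical values $\bar m,\bar n$ substituted for its free first- and second-sort variables, into a propositional sequent of size polynomial in $\max(\bar m,\bar n)$ and, in the $\VZ$ case, of depth bounded by a constant depending only on $\varphi$ and the proof. The function $F_\varphi$ claimed in the effectivity part is then simply the algorithm that reads the fixed arithmetic proof and emits this propositional translation inference by inference.

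First I would fix the translation $\llbracket \cdot \rrbracket_{\bar m,\bar n}$ precisely, and check that it sends $\Sigma^B_0$ formulas to propositional formulas of constant depth and polynomial size, using the tupling and pairing machinery described in the preliminaries. Then I would handle the base case: each non-logical axiom of $\VZ$ must admit a polynomial-size, bounded-depth Frege proof of its translation. The Basic axioms, L1, L2 and SE translate to tautologies about concrete finite strings that are derivable in constant depth; the $\Sigma^B_0$-COMP axiom is essentially trivial on the propositional side because the comprehension set is explicitly witnessed by the propositional variables encoding the characteristic string of $\varphi$. For the $\VNC^1$ clause one additionally needs a Frege (not necessarily bounded-depth) proof of the translation of the $MFV$ axiom, which amounts to propagating truth values up the binary tree encoded by $G$ by an external induction on subtree size.

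The main obstacle is the translation of the $\Sigma^B_0$-induction rule while keeping the depth bounded. A naive translation of induction over a range of length $m$ gives a propositional proof of depth $\Omega(m)$, which destroys both the size bound and the depth bound. The standard fix is a divide-and-conquer translation: from short proofs of $\Phi(x)\to\Phi(x+1)$ one uniformly derives $\Phi(a)\to\Phi(b)$ for progressively longer intervals by cutting two such implications together at their midpoint, arranging the cuts in a balanced binary tree of height $O(\log m)$. Since $\Phi$ is $\Sigma^B_0$, every cut formula has constant depth, and the resulting proof of $\Phi(m)$ has polynomial size and constant depth. This is the only place where the bounded-depth version requires real work; in the $\VNC^1$ case the depth constraint is dropped, so even the unbalanced iteration already works.

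The remaining inferences are routine: propositional rules are mirrored by their propositional counterparts, cut is preserved, equality is handled with short proofs by the usual reflexivity/substitution lemmas, and the two bounded number-quantifier rules translate to long conjunctions or disjunctions indexed by the numerical bound, contributing only a constant additional depth. The entire procedure is a uniform syntactic recursion on the (finite, fixed) arithmetic proof of $\varphi$, whose output size is polynomial in the parameters, which gives the effective polynomial-time function $F_\varphi$. The bounded-depth guarantee in the $\VZ$-clause follows because every depth-increasing step of the translation costs only an additive constant and is applied a constant number of times along the fixed arithmetic proof.
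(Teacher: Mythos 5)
The paper treats this statement as a known fact cited from Cook--Nguyen \cite{CN10} and gives no proof of it, so there is no in-paper argument to compare against. Your sketch follows the general outline of the Paris--Wilkie translation, which is indeed the right method, but you misplace the difficulty in two respects.

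The step you single out as the ``main obstacle'' --- translating a $\Sigma^B_0$-induction rule via balanced divide-and-conquer to keep the depth bounded --- rests on conflating formula depth with proof-tree height. In this paper ``bounded depth'' means the depth of the \emph{formulas} occurring in the derivation is bounded by a constant (equivalently, every line is an $\mathsf{AC}^0$-formula); the height of the derivation tree is not constrained. Naively chaining $\Phi(0)\to\Phi(1),\ldots,\Phi(m-1)\to\Phi(m)$ through $m$ successive cuts produces only sequents of the fixed form $\Phi(0)\to\Phi(k)$, whose formulas all have the same constant depth because $\Phi$ is $\Sigma^B_0$; the size is polynomial and the depth bound is already met with no balancing whatsoever. (Moreover, $\VZ$ as axiomatized here has no induction rule at all --- induction is a derived theorem, and the axiom schema actually present is $\Sigma^B_0$-COMP.) Conversely, the step you wave off as ``essentially trivial'' is where the real content lies. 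The $\Sigma^B_0$-COMP axiom begins with a bounded existential \emph{string} quantifier $\exists X\le y$, and the translation $\llbracket\cdot\rrbracket$ is defined only for $\Sigma^B_0$ formulas, which have no string quantifiers. In a general $\VZ$ sequent-calculus derivation, comprehension may appear as a cut formula with its string quantifier intact, and your translation has nothing to say about such a cut. The repair used in \cite{CN10} is to pass to a universal conservative extension of $\VZ$ in which comprehension becomes a $\Sigma^B_0$ defining axiom for a new $\mathsf{AC}^0$ function symbol, and then invoke anchored (free-cut-free) completeness so that every cut formula in the normalized derivation is an instance of a $\Sigma^B_0$ axiom; only then is the inference-by-inference translation guaranteed to stay within constant depth. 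That normalization step, not the depth of an induction chain, is the crux, and it is missing from your sketch.
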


Facts~\ref{fact:VZ and bd Frege} and \ref{fact:VNC1 and NC1 Frege} are examples of general principles, the so
called {\em Reflection Principles}, which are defined as follows.

\begin{definition}
  [Reflection Principle] Let $P$ be a pps. Then the {\em Reflection Principle}
  for $P$, $\mathsf{Ref}_P$, is the $\forall\Delta^B_1$-formula (w.r.t. $\VZ$) $$\forall\Pi\forall
  X\forall Z ((\mathsf{Fla}(X)\land
  \mathsf{Prf}_P(\Pi,X))\rightarrow (Z\vDash X)),$$ where $\mathsf{Prf}_P$ is a $\Delta^B_1$-predicate formalizing
  $P$-proofs.
\end{definition}

Reflection Principles condense the strength of propositional proof systems. In what follows we will summarize
some such results for the proof systems and theories used here. A detailed exposition can be found in
\cite{CN10}, chapter X, or in \cite{Kra95}, chapter 9.3.

\begin{theorem}\label{Thm Simulation by Reflection}
  If $\VZ\vdash \mathsf{Ref}_{F}$ then bounded depth Frege
  p-simulates Frege.
\end{theorem}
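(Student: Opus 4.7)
The plan is to combine two ingredients: the assumption $\VZ\vdash\mathsf{Ref}_F$ together with the effective propositional translation recorded in Fact~\ref{fact:polynomial simulation}. The predicate $\mathsf{Ref}_F$ is a $\forall\Delta^B_1$ statement, and by unfolding the $\Delta^B_1$ predicates $\mathsf{Fla}$, $\mathsf{Prf}_F$, and $Z\vDash X$ (each definable by a $\Sigma^B_0$ formula, possibly with auxiliary string witnesses) one can bring $\mathsf{Ref}_F$ into an equivalent $\forall\Sigma^B_0$ form to which that fact applies. This yields a polynomial $p$ and a polynomial-time function $F$ that, on parameters $(\bar m,\bar n)$, outputs an $\mathsf{AC}^0$-Frege proof of size at most $p(\max(\bar m,\bar n))$ of the propositional translation $\llbracket\mathsf{Ref}_F\rrbracket_{\bar m,\bar n}$.

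Next I would describe the simulation algorithm. Given a Frege proof $\pi$ of a formula $\phi$, pick $\bar m,\bar n$ large enough so that $\bar m$ accommodates the bits encoding $\pi$ and $\phi$ and $\bar n$ accommodates a truth assignment $Z$ for the variables of $\phi$. Apply $F$ to obtain a bounded depth Frege proof $\Pi^\ast$ of $\llbracket\mathsf{Ref}_F\rrbracket_{\bar m,\bar n}$, then substitute the propositional variables coding the bits of $\Pi$ by the constants for $\pi$, and those coding the bits of $X$ by the constants for $\phi$. After this substitution, the antecedent $\mathsf{Fla}(X)\land\mathsf{Prf}_F(\Pi,X)$ collapses to a constantly-true conjunction which a few standard bounded depth Frege steps discharge, leaving a bounded depth Frege derivation of the translation of $Z\vDash\phi$ with only $Z$ as free propositional input. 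Up to the standard extension-variable encoding of the evaluation of $\phi$, this is essentially a proof of $\phi$ itself (with the bits of $Z$ placed at the variable positions of $\phi$), and the whole construction clearly runs in time polynomial in $|\pi|+|\phi|$.

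The hard part will be reconciling the fact that $\phi$ can have unbounded depth with the requirement that the resulting object live in $\mathsf{AC}^0$-Frege. The translation of $Z\vDash X$ introduces an auxiliary propositional variable for the truth value at each subformula, which keeps all intermediate formulas of bounded depth; but connecting these extension variables back to $\phi$ in the concluding line has to be handled carefully, either by adopting the convention that $\phi$ appears there as a single extension atom denoting the root of the evaluation, or by allowing only the final formula to be depth-unbounded while every preceding line is bounded. Once this interface is fixed the remaining verifications are routine, and the claimed p-simulation of Frege by bounded depth Frege follows.
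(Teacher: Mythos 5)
Your proposal follows essentially the same route as the paper's proof sketch: apply Fact~\ref{fact:polynomial simulation} to the (equivalently $\forall\Sigma^B_0$) Reflection Principle to get short bounded depth Frege proofs of its translations, substitute the bits of the given Frege proof $\pi$ and formula $\varphi$ for the variables of $\Pi$ and $X$, discharge the antecedent $\mathsf{Fla}(X)\land\mathsf{Prf}_F(\Pi,X)$ inside bounded depth Frege, and then extract $\varphi$ from the translated conclusion. The ``hard part'' you flag at the end is precisely the step the paper delegates to Kraj\'i\v cek's Lemma~9.3.7, so you have correctly identified both the structure of the argument and its one nontrivial interface.
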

We will only give a brief sketch of the proof here and leave out the technical details.
\begin{proof}[Sketch]
Let $\varphi$ be a formula and $\pi_{\varphi}$ a Frege proof of $\varphi$ which is witnessed by a Turing machine
$T_F$ (cf Def~\ref{def pps}). Since $\VZ$ proves $\mathsf{Ref}_{F}$, by Facts~\ref{fact:VZ and bd Frege} and
\ref{fact:polynomial simulation} we have polynomial size proofs of its translations $\llbracket
\mathsf{Ref}_F\rrbracket$ in bounded depth Frege. Bounded depth Frege itself, however, is strong enough to
verify that a proper encoding of the computation of $T_F$ on input $(\pi_\varphi,\varphi)$ is correct. Thus it
can verify that $\pi_{\varphi}$ is a Frege-proof and, using the translation of the Reflection Principle and the
Cut rule, conclude $\llbracket\mathsf{Taut}(\varphi)\rrbracket$. From this $\varphi$ follows, cf.~\cite{Kra95}
Lemma~9.3.7.
\end{proof}

Given a term $t$ and a variable $x$, we can also introduce the  $t$-bounded version of the Reflection Principle
for some given pps $P$, $\mathsf{Ref}_P(t(x))$ that claims soundness only for $t$-bounded proofs.

\begin{definition}
  [Bounded Reflection] Let $t$ be a $\mathcal L^2_{A}$-term, $x$
  a first-sort variable and $P$ a pps. Then the {\em Bounded
  Reflection Principle} $\mathsf{Ref}_P(t(x))$ is the formula
  $$\forall \Pi\leq t(x)\forall X\leq t(x)\forall Z\leq t(x)((\mathsf{Fla}(X)\land
  \mathsf{Prf}_P(\Pi,X))\rightarrow (Z\vDash X)).$$
\end{definition}

We can now generalize Theorem~\ref{Thm Simulation by Reflection} in the following way.

\begin{theorem}
  \label{Thm Simulation by Bounded Reflection}
  Let $t$ be a $\mathcal L^2_{A}$-term and $x$
  a number variable. If $t(x)<x$ for $x$ large enough
  and if $\VZ\vdash \forall x \mathsf{Ref}_{F}(t(x))$ then for every
  propositional formula $\varphi$ with a Frege proof of length
  $t(x)$ there is a bounded depth Frege proof of $\varphi$ of length
  $x^{O(1)}$. This proof can be efficiently constructed.
\end{theorem}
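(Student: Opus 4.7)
The plan is a direct adaptation of the argument sketched for Theorem~\ref{Thm Simulation by Reflection}, with the $t(x)$-bounded reflection principle playing the role of the unbounded one. The crucial observation is that the propositional translation $\llbracket\mathsf{Ref}_F(t(x))\rrbracket$ is parameterised by the numerical value of $x$, so that the polynomial simulation of Fact~\ref{fact:polynomial simulation} yields bounded depth Frege proofs of size $x^{O(1)}$.

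Given a formula $\varphi$ together with a Frege proof $\pi_\varphi$ with $|\varphi|,|\pi_\varphi|\leq t(x)$, I would first apply the hypothesis $\VZ\vdash\forall x\,\mathsf{Ref}_F(t(x))$ together with Fact~\ref{fact:polynomial simulation} to produce, in polynomial time in $x$, a bounded depth Frege proof of $\llbracket\mathsf{Ref}_F(t(x))\rrbracket_x$ of size $x^{O(1)}$. This propositional formula has variables corresponding to the bits of the quantified objects $\Pi,X,Z$, each of length at most $t(x)<x$, so in total only $O(x)$ many variables arise.

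Next I would substitute the bits of $\pi_\varphi$ for the $\Pi$-variables and the bits of $\varphi$ for the $X$-variables. Since $\mathsf{Prf}_F$ is $\Sigma^B_0$-definable, the statement $\mathsf{Prf}_F(\pi_\varphi,\varphi)$ is a true $\Sigma^B_0$-fact about concrete inputs and is provable in $\VZ$; Fact~\ref{fact:polynomial simulation} then gives an $x^{O(1)}$-size bounded depth Frege derivation of its propositional translation. Combining this with the translated reflection principle via the Cut rule yields a bounded depth Frege derivation of $\llbracket Z\vDash\varphi\rrbracket$ with $Z$ free, i.e.\ essentially the propositional translation of $\mathsf{Taut}(\varphi)$. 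The standard extraction argument (cf.~\cite{Kra95}, Lemma~9.3.7) then produces $\varphi$ itself in polynomial additional size, and the entire construction is polynomial time.

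The main technical point to keep track of is the size-accounting: one must verify that the polynomial bounding the bounded depth Frege proofs is in $x$ rather than in $t(x)$. This is exactly where the hypothesis $t(x)<x$ enters, since it ensures that the encoding lengths of all quantified strings lie below the translation parameter $x$, so that Fact~\ref{fact:polynomial simulation} applies directly with polynomial dependence on $x$. Effectiveness of the construction is inherited from the polynomial-time translations of $\VZ$-proofs and the polynomial-time verification of Frege-proofs, and assembling the pieces via Cut is routine.
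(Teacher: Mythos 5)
Your proposal is correct and follows the same route as the paper, which simply cites the proof of Theorem~\ref{Thm Simulation by Reflection} verbatim and notes that the bounded reflection principle lets one encode Frege proofs of length $t(x)$ as $x^{O(1)}$-size bounded depth Frege proofs. Your expansion—substituting the concrete proof and formula bits, verifying $\mathsf{Prf}_F$ inside bounded depth Frege, applying Cut, and invoking Kraj\'{\i}\v{c}ek's Lemma~9.3.7 for the final extraction—is exactly the argument the paper delegates to, and your explicit size-accounting point (that $t(x)<x$ keeps the translated string lengths below the parameter $x$ so Fact~\ref{fact:polynomial simulation} yields bounds polynomial in $x$) is the one detail the paper leaves implicit; it is correct.
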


\begin{proof}
  The proof is the same as that of Theorem~\ref{Thm Simulation by
  Reflection}. Using the Bounded Reflection Principle we can encode
  Frege proofs of length $t(x)$ as bounded depth Frege proofs of length $x^{O(1)}$.
\end{proof}

As a corollary we get


\begin{corollary}
  If $\VZ\vdash \mathsf{Ref}_{F}(\abs{x}^k)$ for all $k\in\Nat$, then
  bounded depth Frege sub exponentially simulates Frege:
  For all $D>1,\delta>0$ exists $d\geq D$, such that the existence of a Frege proof of length $m$ of a depth $D$ formula
  implies the existence of a depth $d$ Frege proof of length at most $2^{m^{\delta}}$.
\end{corollary}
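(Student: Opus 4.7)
The plan is to specialize Theorem~\ref{Thm Simulation by Bounded Reflection} by plugging in $t(x) = \abs{x}^k$ for an appropriately large constant $k$ depending on $\delta$, and then tuning the parameter $x$ so that $\abs{x}^k$ just exceeds the length $m$ of the given Frege proof.

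First I fix $D>1$ and $\delta>0$, and choose any integer $k$ with $1/k < \delta$ (e.g.\ $k=\lceil 2/\delta\rceil$). Since $\abs{x}^k$ is bounded above by $(\log_2 x + 1)^k$, we have $\abs{x}^k<x$ for all sufficiently large $x$, so the hypothesis of Theorem~\ref{Thm Simulation by Bounded Reflection} on the term $t(x)=\abs{x}^k$ is satisfied. By assumption $\VZ\vdash \forall x\,\mathsf{Ref}_{F}(\abs{x}^k)$, so Theorem~\ref{Thm Simulation by Bounded Reflection} applies. Fact~\ref{fact:polynomial simulation} tells us moreover that the bounded depth Frege proofs obtained from the reflection argument have some fixed depth $d_k$ determined by $k$ (and by the depth of the $\Sigma^B_0$-formula translating $\mathsf{Ref}_F(\abs{x}^k)$).

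Now suppose $\varphi$ is a formula of depth at most $D$ and has a Frege proof $\pi_\varphi$ of length $m$. Pick $x$ to be the least number with $\abs{x}^k\geq m$, so that $\abs{x}=\lceil m^{1/k}\rceil$ and hence $x\leq 2^{m^{1/k}+1}$. Both $\pi_\varphi$ and $\varphi$ are then bounded by $t(x)=\abs{x}^k$, so Theorem~\ref{Thm Simulation by Bounded Reflection} produces a bounded depth Frege proof of $\varphi$ of length at most $x^{O(1)}$. Plugging in our choice of $x$ gives the bound
\[
 x^{O(1)} \;\leq\; 2^{O(m^{1/k})} \;\leq\; 2^{m^\delta}
\]
for all sufficiently large $m$ (using $1/k<\delta$). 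For the finitely many small $m$ the claim is trivial after padding, so this suffices.

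It remains to pick the depth. The proof produced by Theorem~\ref{Thm Simulation by Bounded Reflection} has depth at most some constant $d_k$ coming from the reflection proof, combined with the constant depth needed to invoke the Cut rule on a depth $D$ formula $\varphi$. Setting $d:=\max(D,d_k)+O(1)$ gives a depth $d\geq D$ that works uniformly for all inputs, completing the simulation. The main obstacle is just the bookkeeping: one must check that increasing $k$ (hence shrinking the exponent $1/k$) only affects the depth $d$ and the hidden constant in $x^{O(1)}$, not the exponent $\delta$ in the size bound, and that the effective construction of the proofs from Fact~\ref{fact:polynomial simulation} is preserved when the outer parameter $x$ is itself exponential in $m$.
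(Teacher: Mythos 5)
Your proof is correct and supplies exactly the parameter-tuning argument the paper leaves implicit (the corollary is stated with no proof of its own). The key steps match what the paper's Theorem~\ref{Thm Simulation by Bounded Reflection} and its proof sketch are designed to deliver: pick $k$ with $1/k<\delta$, note $\abs{x}^k<x$ eventually, instantiate $x$ so that $\abs{x}=\lceil m^{1/k}\rceil$, and observe that $x^{O(1)}=2^{O(m^{1/k})}\le 2^{m^\delta}$ for large $m$ precisely because $1/k<\delta$ absorbs the hidden constant. Your closing worries are not actual gaps: the depth of the translated proof of $\llbracket\mathsf{Ref}_F(\abs{x}^k)\rrbracket$ coming from Fact~\ref{fact:polynomial simulation} is a constant depending only on the formula (hence on $k$, not on $\bar m,\bar n$), the cut at the end adds only $O(1)$ depth on top of $\max(D,d_k)$, and since $d$ is allowed to depend on both $D$ and $\delta$ this is fine. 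The efficiency remark at the end goes beyond what the corollary claims, but it is harmless.
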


\section{Polylogarithmic Cuts of Models of $\VZ$ are Models of $\mathbf{VNC}^1$.}\label{Sec:Polylog Cuts}

We will first introduce the notion of a cut $\mathcal I$ of a given two-sorted arithmetic model $\mathcal M$.
This model theoretic approach provides a very good insight on what actually happens semantically with the small
elements of arithmetical models.

\begin{definition}
  [Cut] Let $T$ be a two-sorted arithmetic theory and
  $$N=\{N_1,N_2,+^{N},\cdot^{N},\leq^{N},0^N,1^N,
  \abs{\cdot}^{N},=^{N}_1,=^{N}_2,\in^{N}\}$$ a model of $T$.
  A {\em cut} $$M=\{M_1,M_2,+^{M},\cdot^{M},\leq^{M},0^M,1^M,
  \abs{\cdot}^{M},=^{M}_1,=^{M}_2,\in^{M}\}$$ of $N$ is any substructure
  such that
  \begin{iteMize}{$\bullet$}
    \item $M_1\subseteq N_1$, $M_2\subseteq N_2$,
    \item $0^M=0^N$, $1^M=1^N$,
    \item $M_1$ is closed under $+^{N},\cdot^{N}$ and downwards with respect to $\leq^{N}$,
    \item $M_2 =\{ X\in N_2\mid X\subseteq M_1\}$, and
    \item $\circ^{M}$ is the restriction of $\circ^{N}$ to $M_1$ and $M_2$ for all
          relation and function symbols $\circ\in\mathcal L_{A}^2$.
  \end{iteMize}
  We call this cut the {\em Polylogarithmic Cut} iff
  $$x\in M_1\Leftrightarrow \exists a\in N_1, k\in \Nat\ x\leq\abs{a}^k.$$
\end{definition}

To examine the strength of the theory of such cuts of models of $\VZ$, we will show that a formal connection
between efficient computability and $\Sigma^B_0$-definability holds. This stands in contrast to general bounded
subsets, where the connection is presumably only with respect to $\Sigma^B_1$-definability via the predicate
$\mathsf{ACC}$ (see \eqref{Def ACC} on page \pageref{Def ACC}). The intended theorem is a formalization of
Nepomnjascij's Theorem \cite{Nep70} (see also \cite{Kra95} pg.20). We will sketch the original proof before
starting the formalization.

\begin{theorem}[Nepomnjascij \cite{Nep70}]
  \label{Thm Nepomnjascij} Let $c\in\Nat$ and $0<\epsilon <1$ be
  constants. Then if the language $L\in \mathbf{TimeSpace}(n^c,n^{\epsilon})$,
  the relation $x\in L$ is definable by a $\Sigma^B_0$-formula over $\Nat$.
\end{theorem}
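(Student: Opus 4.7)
The plan is to carry out Nepomnjascij's classical recursive splitting argument. Let $M$ be a deterministic Turing machine deciding $L$ in time $n^c$ and space $n^\epsilon$, where $n=\abs{x}$. A configuration of $M$ consists of a state, a head position and a tape content of length $n^\epsilon$, and therefore has bit-length $O(n^\epsilon)$; the single-step successor relation $\mathsf{Step}_M(C,C')$ is a purely local check around the head and is trivially $\Sigma^B_0$.

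Next I would define $\mathsf{Reach}_t(C,C')$, meaning that $C'$ is obtained from $C$ by at most $t$ computation steps of $M$, and shrink the time parameter by the standard splitting identity. Pick a rational $\delta$ with $0<\delta<1-\epsilon$ (possible exactly because $\epsilon<1$) and set $r=\lceil n^\delta\rceil$. Then
\[
  \mathsf{Reach}_{r\cd s}(C,C') \;\equiv\; \exists D_0\CommaDots D_r \bigl( D_0=C \land D_r=C' \land \forall i<r\; \mathsf{Reach}_s(D_i,D_{i+1}) \bigr).
\]
Each existential block names $r+1=O(n^\delta)$ configurations of total bit length $O(n^{\delta+\epsilon})=o(n)$, so --- in the spirit of the $\mathsf{ACC}$-encoding used in the preliminaries --- the whole block can be packed into a single sufficiently short object whose bits are accessed by bounded number quantifiers, and the inner $\forall i<r$ is likewise a bounded number quantifier. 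Iterating the identity $k=\lceil c/\delta\rceil$ times --- a constant depending only on $c$ and $\epsilon$ --- pushes $t$ down from $n^c$ to $1$, at which point $\mathsf{Reach}_1$ collapses to $\mathsf{Step}_M$.

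Substituting this $k$-fold unfolding into $x\in L \Leftrightarrow \mathsf{Reach}_{n^c}(C_{\mathrm{init}}(X),C_{\mathrm{acc}})$ yields an alternating $\exists\forall\exists\forall\cdots$ prefix of constant length followed by a Boolean combination of $\Sigma^B_0$ atoms, which is itself $\Sigma^B_0$. The main technical obstacle is the balancing act hidden in the choice of $\delta$: it must be small enough that each existential block --- which carries $r\cd n^\epsilon$ bits of guessed configuration data --- fits into the polynomial budget tolerated by bounded quantification, yet strictly positive so that the splitting recursion terminates after constantly many levels. Both constraints are simultaneously satisfied precisely when $\epsilon<1$, which is exactly why the strict inequality in the hypothesis is essential and why the constant bound on the quantifier alternation depends on $c$ and $\epsilon$ only.
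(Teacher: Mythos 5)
Your proposal is correct and follows the same recursive-splitting argument as the paper: bound the computation by repeatedly dividing the time into polynomially many blocks short enough that the block of intermediate configurations packs into a single bounded number, then recurse a constant number of times. The only cosmetic difference is your cautious choice of $\delta<1-\epsilon$ (giving block size $o(n)$) where the paper splits into exactly $n^{1-\epsilon}$ pieces (giving $O(n)$) and takes a one-block computation rather than a single step as the base case; both are equally valid.
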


\proof
  We will prove the theorem by induction on $k$ for
  $L\in\mathbf{TimeSpace}(n^{k\cdot(1-\epsilon)},n^{\epsilon})$.

  Let $k=1$ and $L\in \mathbf{TimeSpace}(n^{k\cdot(1-\epsilon)},n^{\epsilon})$.
  For any $x\leq 2^n$ the whole computation can be coded by a number $y$ of size
  $2^{O(n)}$. The existence of such a computation gives the desired $\Sigma^B_0$-definability.

  For $k>1$ we write a sequence $y_0,y_1,\dots , y_{n^{1-\epsilon}}$
  of intermediate results coding the computation, where $y_0$ codes the starting configuration
  on input $x$, such that we can verify
  that $y_{i+1}$ is computable from $y_i$ in
  $\mathbf{TimeSpace}(n^{(k-1)\cdot(1-\epsilon)},n^{\epsilon})$.
  By assumption there exists a $\Sigma^B_0$-formula $\mathsf{reach}^{k-1}$ such that
  $\mathsf{reach}^{k-1}(y_i,y_{i+1})$ holds iff $y_{i+1}$ is computed
  from $y_i$. Additionally, the whole sequence has length $O(n)$
  and so we can write the sequence of intermediate results $y_i$ as a number $y$ of length $O(n)$.
  Now, the $\Sigma^B_0$-definition of $x\in L$ is simply
  \begin{equation*}
    \begin{split}
      \exists y\leq 2^{O(n)} \forall i&\leq n^{1-\epsilon} \mathsf{reach}^{k-1}(\langle y\rangle_i,\langle
      y\rangle_{i+1})\\&\land \langle y\rangle_0\mbox{ encodes the starting configuration of }A
      \mbox{ on input }x\\&\land \langle y\rangle_{n^{1-\epsilon}} \mbox{ is in an accepting
  state}.\rlap{\hbox to 161 pt{\hfill\qEd}}
    \end{split}
  \end{equation*}

\noindent We will now formalize this result in $\VZ$ as follows

\begin{theorem}\label{Thm Nepomnjascij_formalized} Let $N\vDash\VZ$.
  Let $m=\abs a$ for some $a\in N_1$ and let $c,k\in\Nat$ and $\epsilon<\frac{1}{k}$. If
  $L\in\mathsf{TimeSpace}(m^c,m^{\epsilon})$ (for strings of length $m$) is computed by Turing machine $A$, then there exists a $\Sigma^B_0$
  definition in $N$ of the $\Sigma^B_1$-predicate $\mathsf{ACC}_A$ on the interval $[0,m^k]$. I.e. any $Y\in L$,
  bounded by $m^k$ is $\Sigma^B_0$-definable in $N$ and therefore exists in the polylogarithmic cut of $N$.
\end{theorem}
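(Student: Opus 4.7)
The plan is to formalize the original inductive proof of Nepomnjascij's theorem inside $\VZ$, by meta-induction on a standard natural number parameter $\ell$ that measures the depth of nested existential quantifiers in the $\Sigma^B_0$-formula being constructed. Writing $\gamma := kc$ and $\delta := k\epsilon$, the assumption $\epsilon < 1/k$ becomes $\delta < 1$, so on any input $Y$ with $|Y| \leq m^k$ the machine $A$ uses time at most $m^{\gamma}$ and space at most $m^{\delta} < m$. Consequently each individual configuration of $A$ is a string of fewer than $m$ bits and is representable by a number bounded by $a$ (since $|a|=m$ forces $a$ to be of order $2^m$); such configurations are therefore legitimate values in $N_1$ and may appear as existentially bound variables in a $\Sigma^B_0$-formula evaluated in $N$. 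This is exactly the point at which the hypothesis $\epsilon<1/k$ enters, and it is the reason the theorem is phrased on the interval $[0,m^{k}]$ rather than anything larger.

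At the base of the meta-induction I choose $\ell$ small enough that the total time-space product is at most $m$, i.e.\ $m^{\gamma+\delta} \leq m$; then the entire computation matrix of $A$ on $Y$ packs into a single number $w \leq a$, and the $\Sigma^B_0$-formula simply asserts the existence of such a $w$ together with the local transition checks extracted from the $\mathsf{ACC}_A$-definition of \eqref{Def ACC}, with $W$ replaced by $w$ under the sequence identification $\langle w\rangle \approx W$ reviewed in the preliminaries. For the inductive step I split the $m^{\gamma}$ time steps of $A$ into $m^{1-\delta}$ consecutive chunks of length $m^{\gamma-(1-\delta)}$. The associated sequence $y_0,y_1,\dots,y_{m^{1-\delta}}$ of chunk-boundary configurations consists of $m^{1-\delta}$ entries of $m^{\delta}$ bits each, for a total of $m$ bits, and therefore packs into a single number $y\leq a$ in $N_1$. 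By induction hypothesis I have at my disposal a $\Sigma^B_0$-formula $\mathsf{reach}^{\ell-1}(u,v)$ expressing reachability within one chunk, and the level-$\ell$ formula reads
\[
\exists y \leq a\,\bigl(\langle y\rangle_0 \text{ is initial on } Y\ \wedge\ \langle y\rangle_{m^{1-\delta}} \text{ is accepting}\ \wedge\ \forall i<m^{1-\delta}\ \mathsf{reach}^{\ell-1}(\langle y\rangle_i,\langle y\rangle_{i+1})\bigr).
\]
Because $\ell$ never exceeds the standard integer $\lceil \gamma/(1-\delta)\rceil$ (determined by the standard parameters $k,c,\epsilon$), the meta-induction terminates after finitely many steps and yields a genuine $\Sigma^B_0$-formula.

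The main technical hurdle is to verify at every level that (i) each existentially bound quantity stays within $N_1$, which by the bit counting above reduces to the single observation that $a \in N_1$ dominates $2^m$; (ii) the coordinate extractions $\langle y\rangle_i$, the initial- and accepting-configuration predicates, and the chunk-boundary consistency conditions are all expressible with only bounded number quantifiers and so remain $\Sigma^B_0$, which follows from the standard sequence-encoding machinery of $\VZ$ described in the preliminaries; and (iii) the constructed formula is equivalent in $N$ to the $\Sigma^B_1$-predicate $\mathsf{ACC}_A$ on the interval $[0,m^k]$, shown by induction on $\ell$ using the correctness of $\mathsf{reach}^{\ell-1}$ from the previous level. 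Once these three items are in place, $\Sigma^B_0$-COMP produces the witnessing computation matrix $W$ as an element of $N_2$ whenever $\mathsf{ACC}_A(Y)$ holds, so $Y$ indeed exists in the polylogarithmic cut of $N$ as claimed.
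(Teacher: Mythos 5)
Your proposal reproduces the paper's meta-inductive chunking argument: a constant (standard) depth $\ell$ of nested bounded number quantifiers, shrinking the time exponent by $1-\delta$ per level, with the $\epsilon<1/k$ hypothesis used exactly where you use it (to keep each configuration, and each length-$m^{1-\delta}$ sequence of configurations, coded by a number in $N_1$). The one substantive difference is the granularity of the reachability predicate. Your $\mathsf{reach}^{\ell}(u,v)$ is the classical binary configuration-to-configuration predicate of Nepomnjascij's original proof, which directly yields a $\Sigma^B_0$ formula equivalent to $\mathsf{ACC}_A(Y)$ and therefore proves the headline claim of the theorem. The paper instead builds $\mathsf{reach}^d_A(I,p_1,p_2,\mathsf{cell},\mathsf{comp})$ keyed to a single cell address $(p_1,p_2)$, and then assembles an explicit $\Sigma^B_0$ definition of the entry $W[\langle i,j,\cdot\rangle]$ of the full computation matrix. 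That extra bookkeeping is precisely what lets $\Sigma^B_0$-COMP be invoked: COMP wants a $\Sigma^B_0$ formula for ``$z\in W$'', not a $\Sigma^B_0$ formula for acceptance. Your closing sentence, that COMP produces $W$ ``whenever $\mathsf{ACC}_A(Y)$ holds'', elides this step; as written it does not follow from your top-level acceptance formula. The fix is minor and stays inside your framework — define ``cell $(p_1,p_2)$ has value $b$ after $t$ steps'' by a bounded existential over a prefix of your chunk-boundary sequence together with a call to the appropriate $\mathsf{reach}^{\ell'}$, and note (as the paper does) that the witnessing sub-computation is unique so the definition is well-posed — but you should spell it out, because it is the form of the conclusion actually consumed in Theorem~\ref{thm:conclusion Nepo}. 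Two smaller points: (i) the bound in your base case should read $m^{\gamma'+\delta}\leq m$ with $\gamma'$ the \emph{reduced} time exponent at that level, not the original $\gamma=kc$; and (ii) a sequence of $m^{1-\delta}+1$ configurations of size $m^{\delta}+O(\log m)$ bits each costs slightly more than $m$ bits, so you want a bound of the form $y\leq a\cdot m^{O(1)}$ (or $a^{O(1)}$), not $y\leq a$; this costs nothing since $N_1$ is closed under multiplication.
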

  The following version of the proof stems from a discussion with Stephen Cook and Neil Thapen during the SAS
  programme in Cambridge. It is more explicit than the original one and clarifies the argument.

\begin{proof}
  We will inductively on $d$ define a $\Sigma^B_0$ relation $\mathsf{reach}^d_A(I,p_1,p_2,\mathsf{cell},\mathsf{comp})$
  that states that the $p_2$th cell of the work tape of $A$, starting on configuration $I$ and computing
  for $p_1\cdot m^{d\frac{1-k\epsilon}{k}}$ steps via the computation $\mathsf{comp}$ is $\mathsf{cell}$. We will bound the
  quantifiers in such a way that we can conclude that both variables can be of the number sort. As $d$ depends
  only on $A$ and $k$ we will be doing this induction outside of the theory to construct $d$ many formulas. We
  will then prove the above mentioned properties of $\mathsf{reach}^d_A$ by $\Sigma^B_0$ induction
  on $p_1$.

  Keep in mind that a cell is given as a pair $\langle bit, state\rangle$, where $bit$ is the actual
  value of the cell and $state$ is a number $>0$ coding the state the Turing machine is in iff the
  pointer is on that cell and $0$ otherwise. As before the transition function is denoted by $\delta$. We let
  $I$ be a string coding the input at the start of the computation. That is, $I$ is a sequence of length
  $len(I)\leq m^k$, such that $I[1,1]=1$, $I[j,1]=0$ for all $j>1$ and $I[j,0]$ is the $j$th input bit.
  We let $\langle\mathsf{comp}\rangle$ be a sequence encoding the computation of $A$, such that
  $\langle\mathsf{comp}\rangle_{\langle j,j',1\rangle}$ is the state, the machine is in after $j$ steps (0 denotes that the
  read/write head is not on cell $j'$, while a greater number gives the
  state and witnesses that the read/write head is on cell $j'$). $\langle\mathsf{comp}\rangle_{\langle j,j',0\rangle}$
  is the value of cell $j'$ after $j$ steps of the computation. Observe that this also implies that the
  computation can be encoded as a number, that is, it has to be very short. This is straight forward from the
  quantifier bounds.

  We can now define
\begin{equation*}
    \begin{split}
      &\mathsf{reach}^0_A(I,p_1,p_2,\mathsf{cell},\mathsf{comp})\\
&\equiv {\large (}\forall j'<\lceil len(I)^{\epsilon}\rceil,
      \langle\mathsf{comp}\rangle_{\langle 1,j',0\rangle} \approx  I[j',0]\land
      \langle\mathsf{comp}\rangle_{\langle 1,j',1\rangle} \approx
      I[j',1]{\large )}\land\\ 
&\phantom{{}\equiv{}}
      [\forall j<\lceil len(I)^{\frac{1-k\epsilon}{k}}\rceil,j'<\lceil
        len(I)^{\epsilon}\rceil,\alpha<\abs{A}\\
&\phantom{{}\equiv{}}\hspace{16pt}
      (\langle\mathsf{comp}\rangle_{\langle j,j',1\rangle} = 0\rightarrow (\langle\mathsf{comp}
      \rangle_{\langle j,j',0\rangle} = \langle\mathsf{comp}\rangle_{\langle
      j+1,j',0\rangle}))\land\\ 
&\phantom{{}\equiv{}}\hspace{16pt}
      (\langle\mathsf{comp}\rangle_{\langle j,j',1\rangle} =
      \alpha\rightarrow(\\
&\phantom{{}\equiv{}}\hspace{24pt}
      (\langle \delta(\alpha,\langle\mathsf{comp}\rangle_{\langle
        j,j',0\rangle})\rangle_3=0\rightarrow
      (\langle\mathsf{comp}\rangle_{\langle j+1,j',1\rangle}=\langle \delta(\alpha,\langle\mathsf{comp}\rangle_{\langle
      j,j',0\rangle})\rangle_1 \land\\
&\hspace{183 pt}
      \langle\mathsf{comp}\rangle_{\langle j+1,j',0\rangle}=\langle \delta(\alpha,\langle\mathsf{comp}\rangle_{\langle
      j,j',0\rangle})\rangle_2))\land\\
&\phantom{{}\equiv{}}\hspace{24pt}
      (\langle \delta(\alpha,\langle\mathsf{comp}\rangle_{\langle
        j,j',0\rangle})\rangle_3=1\rightarrow
      \langle\mathsf{comp}\rangle_{\langle j+1,j'\dotminus 1,1\rangle}=\langle \delta(\alpha,\langle\mathsf{comp}\rangle_{\langle
      j,j',0\rangle})\rangle_1 \land\\
&\hspace{179 pt}
      \langle\mathsf{comp}\rangle_{\langle j+1,j',0\rangle}=\langle \delta(\alpha,\langle\mathsf{comp}\rangle_{\langle
      j,j',0\rangle})\rangle_2))\land\\
&\phantom{{}\equiv{}}\hspace{24pt}
      (\langle \delta(\alpha,\langle\mathsf{comp}\rangle_{\langle
        j,j',0\rangle})\rangle_3=2\rightarrow
      \langle\mathsf{comp}\rangle_{\langle j+1,j'+1,1\rangle}=\langle \delta(\alpha,\langle\mathsf{comp}\rangle_{\langle
      j,j',0\rangle})\rangle_1 \land\\
&\hspace{179 pt}
      \langle\mathsf{comp}\rangle_{\langle j+1,j',0\rangle}=\langle \delta(\alpha,\langle\mathsf{comp}\rangle_{\langle
      j,j',0\rangle})\rangle_2))))\land\\
&\phantom{{}\equiv{}}\hspace{16pt}
      \forall\ell\neq\ell'<\lceil len(I)^{\epsilon}\rceil(
      \langle\mathsf{comp}\rangle_{\langle j,\ell,1\rangle}>0\rightarrow
      \langle\mathsf{comp}\rangle_{\langle
        j,\ell',1\rangle}=0)]\land\\
&\phantom{{}\equiv{}}
      (\langle \mathsf{cell}\rangle_1 = \langle\mathsf{comp}\rangle_{\langle p_1,p_2,0\rangle})\land
      (\langle \mathsf{cell}\rangle_2 = \langle\mathsf{comp}\rangle_{\langle p_1,p_2,1\rangle}).
    \end{split}
  \end{equation*}

  \noindent It is straightforward to prove by induction on the number of lines in $\mathsf{comp}$ that $\mathsf{comp}$
  is uniquely defined by
  $\mathsf{reach}^0_A$. We let $$\mathsf{Reach}^0_A(I,p_1,p_2,\mathsf{cell})=_{def}\exists\mathsf{comp}<q(\abs{I})\
  \mathsf{reach}^0_A(I,p_1,p_2,\mathsf{cell},\mathsf{comp}),$$
  where $q$ is some polynomial depending on the encoding. Here it is vital that $q$ can be defined such that
  $q(\abs{I})$ is a number in $N$. This is possible due to the quantifier bounds we used when defining
  $\mathsf{reach}^0_A$. Thus, $\mathsf{Reach}^0_A$ is defined by a $\Sigma^B_0$ formula.

  Informally $\mathsf{Reach}^0_A$ formalizes that there is a computation
  $$
  \begin{matrix}
    \langle\mathsf{comp}\rangle_{\langle 1,1,\cdot\rangle} & \langle\mathsf{comp}\rangle_{\langle 1,2,\cdot\rangle} & \cdots &\langle\mathsf{comp}\rangle_{\langle 1,(m^k)^\epsilon,\cdot\rangle}\\
    \langle\mathsf{comp}\rangle_{\langle 2,1,\cdot\rangle} & \langle\mathsf{comp}\rangle_{\langle 2,2,\cdot\rangle} & \cdots &\langle\mathsf{comp}\rangle_{\langle 2,(m^k)^\epsilon,\cdot\rangle}\\
     \vdots    & \ddots & & \vdots\\
     \vdots    & & \ddots & \vdots\\
    \langle\mathsf{comp}\rangle_{\langle (m^k)^{\frac{1-k\epsilon}{k}},1,\cdot\rangle} & \langle\mathsf{comp}\rangle_{\langle (m^k)^{\frac{1-k\epsilon}{k}},2,\cdot\rangle} & \cdots &\langle\mathsf{comp}\rangle_{\langle (m^k)^{\frac{1-k\epsilon}{k}},(m^k)^\epsilon,\cdot\rangle}
  \end{matrix}
  $$
  that is correct in the sense that we can verify that we get from line to line via the transition function of $A$ and gives the appropriate values of the cell at $(p_1,p_2)$. Observe that the size of the whole computation as presented above is linear in $m$, i.e. that it can be coded as a number in $N$.

  We will now proceed by inductively defining $\mathsf{reach}^d_A$ and $\mathsf{Reach}^d_A$. Assume that
  $\mathsf{reach}^{d\dotminus 1}_A$ has already been defined by a $\Sigma^B_0$ formula over $N$. We then let
  \begin{equation*}
    \begin{split}
      \mathsf{reach}^d_A(I,p_1,&\,p_2,\mathsf{cell},\mathsf{comp})\\
 &\equiv  (\forall j'<\lceil len(I)^{\epsilon}\rceil,
      \langle\mathsf{comp}\rangle_{\langle 1,j',0\rangle} \approx  I[j',0]\land
      \langle\mathsf{comp}\rangle_{\langle 1,j',1\rangle} \approx
      I[j',1])\land\\
 &\phantom{{}\equiv{}}
      (\forall j<\lceil len(I)^{\frac{1-k\epsilon}{k}}\rceil\exists\mathsf{comp'}<q(len(I))\forall j'<\lceil
      len(I)^{\epsilon}\rceil\exists \mathsf{cell}'<\abs{A}\forall
      j''<\abs{A}\\
&\phantom{{}\equiv{}}
      (\langle\mathsf{comp}\rangle_{\langle
        j+1,j',j''\rangle}\leftrightarrow \langle
      \mathsf{cell}'\rangle_{j''})\land\\
&\phantom{{}\equiv{}}
      \mathsf{reach}^{d\dotminus 1}_A(\langle \mathsf{comp}\rangle_{\langle
      j,\cdot,\cdot\rangle},m^{\frac{1-k\epsilon}{k}},
      j',\mathsf{cell}',\mathsf{comp'}))\land\\
&\phantom{{}\equiv{}}
      (\langle \mathsf{cell}\rangle_1\leftrightarrow\langle\mathsf{comp}\rangle_{\langle p_1,p_2,0\rangle})\land
      (\langle \mathsf{cell}\rangle_2=\langle\mathsf{comp}\rangle_{\langle p_1,p_2,1\rangle}).
    \end{split}
  \end{equation*}
  Again, we can prove uniqueness of the computation by induction on the number of its lines and let
  $$\mathsf{Reach}^d_A(i,p_1,p_2,\mathsf{cell})=_{def}\exists\mathsf{comp}<q(\abs{I})\
  \mathsf{reach}^d_A(i,p_1,p_2,\mathsf{cell},\mathsf{comp}).$$ That this is a $\Sigma^B_0$ definition follows by
  induction and the same argument as for $\mathsf{Reach}^0_A$. Here, the predicate $\mathsf{Reach}^{d-1}_A(i,p_1,p_2,\mathsf{cell})$ takes the role of the transition function in witnessing that each line follows from the preceding one. The total size again is linear in $m$.

  We now can give a
  $\Sigma^B_0$ definition of the predicate $W[\langle i,j,\cdot\rangle]$ coding the computation as in $\mathsf{ACC}_A$ on input $X$ of length $m^k$. We let $W[\langle i,j,\cdot\rangle]=\mathsf{cell}\equiv$
  \begin{equation*}
    \begin{split}
&
      \exists r_0,\dots,r_d<\abs{X}^{\frac{1-k\epsilon}{k}},con_1,\dots,con_d<p(\abs{X}^{\epsilon})
      \forall z_1,\dots,z_d<\abs{X}^{\epsilon}\exists
      \mathsf{cell}_1,\dots,\mathsf{cell}_d<\abs{A}\\
&\hspace{8 pt}\phantom{{}\equiv{}}
      (i=\smash{\sum_{\ell=0}^d} r_{\ell}\cdot
      \abs{X}^{\ell\frac{1-k\epsilon}{k}}
      \land \mathsf{Reach}^d_A(\tilde
      X,r_d,z_d,\mathsf{cell}_d)\land\langle
      con_d\rangle_{z_d}=\mathsf{cell}_d\\
&\hspace{118 pt}
      \land
      \mathsf{Reach}^{d-1}_A(con_d,r_{d-1},z_{d-1},\mathsf{cell}_d)\land\langle
      con_{d-1}\rangle_{z_{d-1}}=\mathsf{cell}_{d-1}\\
&\hspace{118 pt}
      \ \vdots\\
&\hspace{118 pt}
      \land \mathsf{Reach}^{0}_A(con_1,r_{0},j,\mathsf{cell})),
    \end{split}
  \end{equation*}
  where $p$ is a polynomial depending on the encoding and $\tilde X$ is the starting configuration of $A$ on input
  $X$.

  Informally the above formula says that we compute the configurations of $A$ by using the predicates
  $\mathsf{Reach}^d_A$ through $\mathsf{Reach}^0_A$. That is, after the application of $\mathsf{Reach}^d_A$
  (i.e. after making the biggest steps) we have reached
  configuration $con_d$, which we plug into $\mathsf{Reach}^{d-1}_A$ to get configuration $con_{d-1}$ and so on.
  It remains to show that this definition of $W[\langle
  i,j,\cdot\rangle]$ coincides with the real one, i.e. that $W[\langle i+1,\cdot,\cdot\rangle]$ follows from an application of the transition
  function of $A$ from $W[\langle i,\cdot,\cdot\rangle]$.

  We will prove this inductively, depending on $i$. Again let $r_{\ell}$ be such that $i=\sum_\ell r_{\ell}\cdot
  \abs{X}^{\ell\frac{1-k\epsilon}{k}}$.
  If $i<\abs{X}^{\frac{1-k\epsilon}{k}}$ the assumption follows straightforwardly from the definition of
  $\mathsf{Reach}^0_A$. Now for bigger $i$. If the $r_0$, given as above, is bigger then $0$ then
  again the assumption follows from the definition of $\mathsf{reach}^0_A$. Now let $\ell>0$ be the first index
  with $r_{\ell}>0$. We the have to argue that $\mathsf{reach}^{\ell'-1}_A$ has the desired property.
  This, however, follows straightforward if we can verify this assertion for $\mathsf{reach}^{\ell'-1}_A$.
  Observe that $d$ is a constant depending only on $A$ and $k$, so we need to make this argument only a
  constant number of steps to reach $\mathsf{reach}^0_A$, where we know that the assertion holds.

  Since we can code the whole computation as a $\Sigma^B_0$-formula (in $N$), we can easily deduce a
  $\Sigma^B_0$-definition of the related set by simply stating that the computation accepts (i.e. that in the
  last line of the computation the state is accepting). This concludes the proof.
\end{proof}

We can now prove our main result.
\begin{theorem}\label{thm:conclusion Nepo}
  Let $N\vDash \VZ$ and $M\subseteq N$ be the polylogarithmic cut. Then
  $M\vDash \VNC^1$.
\end{theorem}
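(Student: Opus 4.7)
The plan is to verify the two components of $\VNC^1$ separately: first that $M$ satisfies every axiom of $\VZ$, and second that $M$ satisfies the monotone formula value axiom $MFV$. The first is essentially routine bookkeeping about substructures, while the heart of the proof is the second, which will follow from Theorem~\ref{Thm Nepomnjascij_formalized} applied to an $\mathsf{NC}^1$ algorithm for monotone formula evaluation.

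For $M \vDash \VZ$, the axioms \textbf{Basic~1}--\textbf{Basic~12}, \textbf{L1}, \textbf{L2}, and \textbf{SE} are quantifier-free (or open universal) and transfer to the substructure $M$ using the closure properties in the definition of a cut together with the fact that $\circ^{M}$ is the restriction of $\circ^{N}$. The nontrivial axiom is $\Sigma^B_0$-COMP. Given a $\Sigma^B_0$ formula $\varphi(z,\vec x,\vec X)$ with parameters $y,\vec x \in M_1$ and $\vec X \in M_2$, I apply $\Sigma^B_0$-COMP in $N$ to obtain the set $Y = \{z<y : \varphi(z,\vec x,\vec X)\} \in N_2$. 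Because every bounded quantifier in $\varphi$ is bounded by a term in elements of $M_1$ and $M_1$ is downward-closed in $N_1$, these quantifiers range over exactly the same witnesses in $M$ and in $N$, so $\Sigma^B_0$ formulas are absolute between $M$ and $N$. Since $Y \subseteq \{0,\dots,y-1\}$ with $y \in M_1$, we have $|Y| \leq y \in M_1$, hence $Y \in M_2$, as required.

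For $M \vDash MFV$, let $a,G,I \in M$ with $G$ coding a monotone formula of size $a$ under the assignment $I$. Because $\mathsf{NC}^1 \subseteq \LogSpace$, there is a deterministic Turing machine $A$ that on input $(G,I,i)$ computes the $i$-th bit of the correct evaluation in polynomial time and logarithmic work-tape space. Since $a, G, I \in M$, there exist $b \in N_1$ and $k \in \Nat$ with $a, |G|, |I| \leq |b|^k$; setting $m = |b|$ and choosing any rational $\epsilon < 1/k$, the algorithm $A$ lies in $\mathsf{TimeSpace}(m^c, m^{\epsilon})$ on inputs of size $\leq m^k$, since $\log m \leq m^{\epsilon}$ for all sufficiently large $m$. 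Theorem~\ref{Thm Nepomnjascij_formalized} then yields a $\Sigma^B_0$-definition in $N$ of the predicate ``the $i$-th output bit of $A$ on $(G,I)$ equals $1$'' on the interval $[0, m^k]$. Applying $\Sigma^B_0$-COMP inside $N$ produces the intended $Y \in N_2$, and since $|Y| \leq 2a+1 \in M_1$, we get $Y \in M_2$. Finally, $\delta_{MFV}(a,G,I,Y)$ is a $\Sigma^B_0$ formula that $N$ satisfies by correctness of $A$, and by the same absoluteness argument as above it also holds in $M$.

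The main obstacle is the delicate matching of resource bounds in the application of Nepomnjascij: the parameter $k$ depends on how large the inputs are seen from within $M$, so $\epsilon$ must shrink whenever $k$ grows, and for each instance of $MFV$ one must pick $\epsilon$ appropriately. Fortunately, the $\mathsf{NC}^1$ evaluation algorithm runs in $\mathsf{TimeSpace}(m^{O(1)}, O(\log m))$ uniformly, which comfortably beats $m^{\epsilon}$ for every fixed $\epsilon>0$, so there is no true quantitative tension. A secondary technicality is that Theorem~\ref{Thm Nepomnjascij_formalized} is phrased for an acceptance predicate $\mathsf{ACC}_A$ rather than a functional output; but this is immediately circumvented by applying it to the decision problem of a single output bit (indexed by an extra input $i$), which preserves the time/space bounds and hence the conclusion.
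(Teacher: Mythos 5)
Your proof is correct and follows essentially the same route as the paper: reduce $MFV$ to the existence of a $\Sigma^B_0$-definition of the evaluation string $Y$ via the formalized Nepomnjascij Theorem, get that definition by noting that monotone formula evaluation is a polynomial-time, logarithmic-space computation, and then use $\Sigma^B_0$-COMP in $N$ together with the smallness of $Y$ to place it in $M_2$. The main difference is presentational: the paper exhibits an explicit recursive algorithm \texttt{NodeValue} and reads off its $O(\log|G|)$ space bound, whereas you invoke $\mathsf{NC}^1 \subseteq \mathbf{L}$ abstractly; you also spell out two things the paper leaves tacit, namely the routine verification that $M\vDash\VZ$ (via downward-closure and $\Sigma^B_0$-absoluteness) and the careful matching of the $\epsilon<1/k$ hypothesis of Theorem~\ref{Thm Nepomnjascij_formalized} against the $O(\log)$ space bound, both of which are welcome additions to what is a rather terse argument in the paper.
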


\begin{proof}
  We have to prove that for all strings $G_{\varphi}\in M_2$, representing a formula $\varphi$ as a tree and
  assignments $I\in M_2$ to its variables (i.e. leafs in the tree representation) a string $Y$ exists in $M_2$ that
  contains all values of $\varphi$'s subformulas as in the definition of $MFV$ in Section~\ref{Sec:Ext of VZ}
  and satisfies the inductive conditions of $MFV$.


  However, by $\Sigma^B_0$-comprehension and the formalized
  Nepomnjascij's Theorem it suffices to describe an algorithm that
  computes, for given $G_{\varphi}$ and $I$, whether $i$ belongs to $Y$  in
  $\mathsf{TimeSpace}(\abs{G_{\varphi}}^k,\abs{G_{\varphi}}^{\epsilon})$
  for some $k\in\Nat$ and $\epsilon<1$.

The following is a recursive algorithm computing the value of $Y[i]$, given $G:=G_{\varphi},I$ and $i$.

\verb"NodeValue(G,I,i)"

{\small
\begin{itemize}
  \item \verb"boolean left; boolean right;"
  \item \verb"If i>2"$\cdot$\verb"|G|"
  \begin{itemize}
    \item \verb"Output (0); End;"
  \end{itemize}
  \item \verb"Else If i>|G|"
  \begin{itemize}
    \item \verb"Output (I[i-|G|]); End;"
  \end{itemize}
  \item \verb"Else If G[i]=1"
  \begin{itemize}
    \item \verb"left := NodeVal(G,I,2i);"
    \item \verb"right := NodeVal(G,I,2i+1);"
    \item \verb"Output (left AND right); End;"
  \end{itemize}
  \item \verb"Else If G[i]=0"
  \begin{itemize}
    \item \verb"left := NodeVal(G,I,2i);"
    \item \verb"right := NodeVal(G,I,2i+1);"
    \item \verb"Output (left OR right); End;"
  \end{itemize}
    \item \verb"Else"
  \begin{itemize}
    \item \verb"Output (0); End;"
  \end{itemize}
\end{itemize}
} 
\noindent Observe that the algorithm at any given point only stores a
constant amount of data per level of the tree $G$ and therefore uses
only $O(\log(\abs{G}))$ space. The number steps the algorithm makes is
clearly polynomial in the size of $G$. Therefore by Theorem~\ref{Thm
  Nepomnjascij_formalized}, for every monotone formula $\varphi$,
representable as a tree in $M$, we get a $\Sigma^B_0$ formula
$\mathsf{eval}_{\varphi}$, such that
$\mathsf{eval}_{\varphi}(i,I)\equiv Y[i]$. Observe that
$\mathsf{eval}_{\varphi}$ depends on the size of $\varphi$ and on its
logical depth, as the first is essentially the size of the input for
the machine, that $eval_{\varphi}$ codes, while the latter determines
the longest iterations in the recursive algorithm. Applying the
Comprehension Schema in $\VZ$, i.e. in $N$, this verifies the
existence of a $Y$ as in $MFV$ for all formulas represented by trees
in $M$.Therefore $MFV$ holds in $M$ and so $M\vDash\VNC^1$.
\end{proof}

\section{Implications for Proof Complexity}\label{Sec:Impl Proof Cmpl}

We now wish to apply the above results to propositional proof systems. More precisely we wish to show that
theories of small cuts of a model of a given theory $\mathcal T$ correspond to stronger proof systems than
$\mathcal T$ does. An elegant way of showing such a statement is via the {\em Reflection Principles} of the
given proof systems, i.e. the statement that the proof system is correct, as explained in Section~\ref{Sec:Rel
BA pps}. With their help we can conclude the following recent result of Filmus, Pitassi and Santhanam
\cite{FPS11}.

\begin{theorem}[\cite{FPS11}]\label{thm FilPitSat}
  Every Frege system is sub exponentially simulated
  by $\mathsf{AC}^0$-Frege systems.
\end{theorem}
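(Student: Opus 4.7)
The plan is to show that $\VZ \vdash \forall x\, \mathsf{Ref}_F(\abs{x}^k)$ for every $k \in \Nat$, from which the corollary to Theorem~\ref{Thm Simulation by Bounded Reflection} immediately yields the sub exponential simulation of Frege by $\mathsf{AC}^0$-Frege.

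To prove the bounded reflection statement I would argue semantically. Fix $k \in \Nat$, let $N \vDash \VZ$, pick $x \in N_1$, and consider $\Pi, X, Z$ in $N$ whose bounds (lengths, in the case of strings) are at most $\abs{x}^k$, with $\mathsf{Fla}(X) \wedge \mathsf{Prf}_F(\Pi, X)$ true in $N$. By the very definition of the polylogarithmic cut $M \subseteq N$, every element bounded by $\abs{x}^k$ lies in $M_1$, so $\Pi, X, Z \in M_2$. Theorem~\ref{thm:conclusion Nepo} gives $M \vDash \VNC^1$, and Fact~\ref{fact:VNC1 and NC1 Frege} then guarantees that the full Reflection Principle $\mathsf{Ref}_F$ holds in $M$. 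Instantiating this principle inside $M$ at $\Pi, X, Z$ produces $M \vDash (Z \vDash X)$.

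It then remains to transfer the satisfaction predicate $Z \vDash X$ from $M$ back to $N$. Because $\mathsf{Fla}$, $\mathsf{Prf}_F$ and the satisfaction relation are all $\Delta^B_1$-definable with defining $\Sigma^B_0$-matrices whose bounded quantifiers range only over numbers polynomially related to $\abs{X}$, these predicates are absolute between $M$ and $N$ on parameters from $M_2$: $M_1$ is an initial segment of $N_1$ closed under addition and multiplication, so $\Sigma^B_0$-properties with parameters drawn entirely from $M$ have the same truth value in $M$ as in $N$. Hence $N \vDash (Z \vDash X)$, proving the instance of $\mathsf{Ref}_F(\abs{x}^k)$ at hand; since $N$, $x$ and $k$ were arbitrary, $\VZ \vdash \forall x\, \mathsf{Ref}_F(\abs{x}^k)$ for every $k \in \Nat$. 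A direct appeal to the corollary of Theorem~\ref{Thm Simulation by Bounded Reflection} then finishes the proof.

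The conceptual heart of the argument — and where all of the technical work has already been done — is the invocation of Theorem~\ref{thm:conclusion Nepo}: without upgrading the theory satisfied by $M$ from $\VZ$ to $\VNC^1$ one could not invoke soundness of Frege inside $M$, and the scheme would collapse into a triviality. The remaining absoluteness verification between $M$ and $N$ is a routine check hinging only on the closure of the polylogarithmic cut under polynomial operations and on the $\Sigma^B_0$-ness of the defining matrices. The main obstacle is therefore already packaged into Theorem~\ref{thm:conclusion Nepo}, and what remains for the present proof is simply to recognize that a length-$\abs{x}^k$ Frege proof together with its formula and a putative falsifying assignment all live inside the polylogarithmic cut where the stronger theory $\VNC^1$, and hence full Frege soundness, is available.
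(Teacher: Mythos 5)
Your proposal is correct and follows the same route as the paper: reduce the theorem (via Theorem~\ref{Thm Simulation by Bounded Reflection} and its corollary) to showing $\VZ\vdash\forall x\,\mathsf{Ref}_F(\abs{x}^k)$ for every standard $k$, and obtain this by passing to the polylogarithmic cut, invoking Theorem~\ref{thm:conclusion Nepo} to upgrade the cut to a model of $\VNC^1$, and applying Fact~\ref{fact:VNC1 and NC1 Frege}. You have simply made explicit the model-theoretic transfer that the paper leaves implicit, namely the persistence of the $\Sigma^B_0$ hypotheses $\mathsf{Fla}$ and $\mathsf{Prf}_F$ and the $\Delta^B_1$ satisfaction predicate between the polylogarithmic cut and the ambient model when all parameters and quantifier bounds lie in the cut.
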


\begin{proof}
  By Theorem~\ref{Thm Simulation by Bounded Reflection} we have to prove the polylogarithmically bounded Reflection
  Principle for Frege in $\VZ$. This, by Theorem~\ref{thm:conclusion Nepo} however,
  corresponds to proving the Reflection Principle for Frege in $\VNC^1$, which holds by
  Fact~\ref{fact:VNC1 and NC1 Frege}. It also follows from Theorem~\ref{Thm Simulation by Bounded Reflection} that this proof can be efficiently computed from the Frege proof.
\end{proof}

Another, related, application is in the separation of propositional proof systems. In \cite{MT11} we proved the
following.
\begin{proposition}\label{Prop Iddo me}
  For almost every random 3CNF $A$ with $n$ variables and $m=c\cdot n^{1,4}$ clauses, where $c$ is a large
  constant, $\neg A$ has polynomially bounded $\mathsf{TC}^0$-Frege proofs.
\end{proposition}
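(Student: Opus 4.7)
The plan is to adapt the spectral refutation method of Feige and Ofek (originally for random 3SAT at density $\Omega(n^{3/2})$) to the slightly sparser regime $m=c\cdot n^{1.4}$, and then to verify that every step of that refutation is expressible as a polynomial-size $\mathsf{TC}^0$-Frege derivation. Outside the proof system, one first performs the probabilistic part: associate to $A$ a signed adjacency-type matrix $M_A$ of polynomial dimension whose entries are determined by the clauses of $A$, and show, by a standard trace-method (matrix Chernoff) argument, that with probability $1-o(1)$ over the choice of $A$ the spectral norm $\|M_A\|$ is bounded by some threshold $B$ which is strictly smaller than the value that any satisfying assignment would be forced to produce through the bilinear form $x^\top M_A x$. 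The density $cn^{1.4}$ is chosen precisely as the smallest value for which concentration kicks in with enough of a gap between the "satisfiable" lower bound and the spectral upper bound.

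Inside the proof system, given $M_A$ together with a rational spectral certificate $W$ (for instance, a witness to $\|M_A\|\le B$ obtained by combining vectors with controlled denominators), one derives $\neg A$ in $\mathsf{TC}^0$-Frege as follows. Assume for contradiction a satisfying assignment $x$ for $A$; using that iterated addition, iterated multiplication and comparison of $n$-bit integers are all $\mathsf{TC}^0$, compute $x^\top M_A x$ inside the proof system and derive a lower bound that uses only the clause-by-clause satisfaction of $A$. From the certificate $W$, derive the matching upper bound $x^\top M_A x \le B$. Combining the two estimates yields a contradiction of fixed polynomial size, and hence a $\mathsf{TC}^0$-Frege refutation of $A$. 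The soundness of each arithmetic step is justified by the standard correspondence between the $\mathsf{TC}^0$ function class and bounded-depth threshold Frege, which supplies polynomial-size subproofs for every counting/linear-algebra identity we invoke.

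The main obstacle is the formalization step: spectral arguments are naturally stated over the reals, but $\mathsf{TC}^0$-Frege can only manipulate discrete objects, so one must carry out the whole argument with rational approximations of bounded bit length and control the cumulative rounding error so that it remains well below the gap $B^{\mathrm{sat}}-B$ left by Step~1. This is exactly what forces the exponent $1.4$ rather than something smaller; any weakening of the spectral gap would push the required precision beyond what a polynomial-size $\mathsf{TC}^0$-Frege proof can carry. Once the precision bookkeeping is fixed, the "almost every" in the statement comes directly from the probability bound of the first step, and the polynomial size bound on the refutation comes from the uniform polynomial-size $\mathsf{TC}^0$-Frege derivations produced in the second step.
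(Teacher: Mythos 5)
The paper itself does not prove this proposition; it cites \cite{MT11} (M\"uller--Tzameret), where the result is obtained by formalizing the Feige--Kim--Ofek unsatisfiability witness in $\mathsf{TC}^0$-Frege. Your two-phase shape --- a probabilistic existence argument for a polynomial-size unsatisfiability certificate, followed by a polynomial-size $\mathsf{TC}^0$-Frege verification of that certificate --- is the right one, and the rational-approximation concerns you flag are real. But the proposal contains a substantive gap: the certificate is not a single spectral bound, and the explanation you give for the exponent $1.4$ is wrong.

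The density $c\cdot n^{1.4}$ is not a mild sparsification of a Feige--Ofek single-matrix spectral argument at $\Theta(n^{3/2})$; it is exactly the threshold of Feige, Kim and Ofek (FOCS 2006), who crossed below $n^{3/2}$ only by combining a spectral bound (on a matrix indexed by \emph{pairs} of literals, hence of dimension $\Theta(n^2)$) with two further combinatorial components: a bound on the number of certain ``inconsistent'' clause-tuples and an imbalance count against any candidate assignment. The refutation then combines all of these in a chain of counting inequalities; a single $x^\top M_A x$ estimate of the kind you sketch does not close the gap at density $n^{1.4}$. Consequently, your claim that the exponent $1.4$ is forced by precision limits of polynomial-size $\mathsf{TC}^0$-Frege derivations is incorrect: the exponent is inherited entirely from the probabilistic part --- it is where the FKO witness stops existing, not where the proof-system bookkeeping breaks down --- and no amount of finer rational approximation in the proof system would push it lower. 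What \cite{MT11} actually contributes is the verification, in $\mathsf{TC}^0$-Frege (via bounded-precision rational arithmetic, as you suggest), of the FKO inequality chain; the hard part of the formalization is carrying the multi-component certificate and the associated counting, not shaving the density.
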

On the other hand it is well known (see for example \cite{CS88}) that such formulas have no subexponential
refutation in Resolution. Thus, this yields an average case separation between Resolution and
$\mathsf{TC}^0$-Frege. We can now extend this result to an average case separation between Resolution and
$\mathsf{AC}^0$-Frege as follows.
\begin{theorem}
  For almost every random 3CNF $A$ with $n$ variables and $m=c\cdot n^{1,4}$ clauses, where $c$ is a large
  constant, $\neg A$ has subexponentially bounded $\mathsf{AC}^0$-Frege proofs.
\end{theorem}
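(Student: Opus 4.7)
The plan is to combine Proposition~\ref{Prop Iddo me} with a subexponential simulation of $\mathsf{TC}^0$-Frege by $\mathsf{AC}^0$-Frege, in exactly the spirit of the proof of Theorem~\ref{thm FilPitSat}. Proposition~\ref{Prop Iddo me} already supplies polynomially bounded $\mathsf{TC}^0$-Frege refutations of almost every random 3CNF with $m=cn^{1,4}$ clauses; once such a simulation is available, translating these refutations yields the desired subexponential $\mathsf{AC}^0$-Frege refutations.

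To obtain the required simulation, I would replay the argument of Section~\ref{Sec:Polylog Cuts} with $\mathbf{VTC}^0$ in place of $\VNC^1$. By the theorem of Paris and Wilkie \cite{PW87,PW85} cited in the introduction, the polylogarithmic cut $M$ of any $N\vDash\VZ$ is a model of $\mathbf{VTC}^0$. Since $\mathbf{VTC}^0$ is the natural theory associated with $\mathsf{TC}^0$-Frege and proves its Reflection Principle (the analog of Fact~\ref{fact:VNC1 and NC1 Frege}), the same argument that gave Theorem~\ref{thm FilPitSat} from Theorem~\ref{thm:conclusion Nepo} shows that $\VZ$ proves $\mathsf{Ref}_{\mathsf{TC}^0\text{-Frege}}(\abs{x}^k)$ for every $k\in\mathbb{N}$. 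Invoking the $\mathsf{TC}^0$-Frege analog of Theorem~\ref{Thm Simulation by Bounded Reflection} then produces the subexponential simulation of $\mathsf{TC}^0$-Frege by $\mathsf{AC}^0$-Frege.

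Putting the two pieces together: for almost every random 3CNF $A$ with $n$ variables and $m=c\cdot n^{1,4}$ clauses, Proposition~\ref{Prop Iddo me} yields a polynomially bounded $\mathsf{TC}^0$-Frege refutation of $A$, and the simulation above converts it into a subexponentially bounded $\mathsf{AC}^0$-Frege refutation. Moreover, as in Theorem~\ref{thm FilPitSat}, the conversion is effective, so the resulting refutations can be produced uniformly from the $\mathsf{TC}^0$-Frege ones.

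The main technical obstacle is ensuring that the $\mathsf{TC}^0$-Frege version of Theorem~\ref{thm FilPitSat} really does go through by this route. Specifically, one must check that $\mathbf{VTC}^0$ proves the Reflection Principle for $\mathsf{TC}^0$-Frege (the expected strengthening of Fact~\ref{fact:VNC1 and NC1 Frege}) and that the propositional translation machinery of Fact~\ref{fact:polynomial simulation} extends to the pair $(\mathbf{VTC}^0,\,\mathsf{TC}^0\text{-Frege})$ with the same quantitative bounds. Both are standard in the Cook–Nguyen framework, but they are the points where the plan would have to be verified in detail.
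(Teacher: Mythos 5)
Your proposal is correct and follows essentially the same route as the paper: get that the polylogarithmic cut of a $\VZ$-model satisfies $\mathbf{VTC}^0$, run the Reflection-Principle/simulation machinery of Theorem~\ref{Thm Simulation by Bounded Reflection} for the pair $(\mathbf{VTC}^0,\mathsf{TC}^0\text{-Frege})$, and compose with Proposition~\ref{Prop Iddo me}. The only divergence is how you obtain $M\vDash\mathbf{VTC}^0$: you invoke Paris--Wilkie directly, whereas the paper deduces it from its own Theorem~\ref{thm:conclusion Nepo} (the cut models $\VNC^1$, hence a fortiori $\mathbf{VTC}^0$). This is a genuine but small difference: your version is self-contained modulo Paris--Wilkie and does not depend on the paper's main new theorem, which shows that the average-case separation is actually available from the older literature; the paper's version keeps the argument internal to the results it has just proved. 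The technical points you flag at the end — that $\mathbf{VTC}^0$ proves $\mathsf{Ref}$ for $\mathsf{TC}^0$-Frege and that the translation machinery of Fact~\ref{fact:polynomial simulation} carries over to that pair — are indeed the ones that need checking, and both are standard in the Cook--Nguyen framework, so the argument goes through.
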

\begin{proof}
  By Theorem~\ref{thm:conclusion Nepo} the polylogarithmic Cut of any $\VZ$-model is a model of $\VNC^1$,
  therefore also of $\mathbf{VTC}^0$. This yields, as in our proof of Theorem~\ref{thm FilPitSat}, that
  $\mathsf{AC}^0$-Frege subexponentially simulates $\mathsf{TC}^0$-Frege. The result now follows from
  Proposition~\ref{Prop Iddo me}.
\end{proof}

\section{Conclusion and Discussion}

As we have seen cuts of models of weak arithmetics constitute an appropriate way for reasoning about
super-polynomial simulations between proof systems. An advantage in comparison to syntactic arguments is the
possible applicability of results in Model Theory and a more uniform treatment. This can readily be observed as
with our argument, e.g. the work of Paris and Wilkie \cite{PW85}\cite{PW87} immediately imply the simulation
results from Bonet et al. \cite{BDGMP04}.

This leads to interesting possibilities for further research, especially towards the weak automatizability of
weak propositional proof systems such as Resolution. The underlying theory, which was $\VZ$ in our argument,
must be significantly weakened, however. If we could take $T^2_1(\alpha)$ as our base theory, we could reason
about whether $Res(\log)$ has the feasible interpolation property in the same way as Kraj\' i\v cek and Pudl\'
ak \cite{KP98}, Bonet, Pitassi and Raz \cite{BPR00} or Bonet, Domingo, Gavald\`a, Maciel, and Pitassi
\cite{BDGMP04}. Now, if $Res(\log)$ does not have quasi-polynomial feasible interpolation we know by a result
from Atserias and Bonet \cite{AB04} that Resolution is not weakly automatizable, so we would be finished.
Whether we can actually do it depends on the strength of the theory the polylogarithmic cut of $T^2_1(\alpha)$
models and if we can formalize some sort of iterated multiplication (such as in \cite{ABH02}) in that theory.
Also, the security of Diffie-Hellman seems to be a more appropriate assumption than that of RSA, as the
computational power needed to verify the correctness of Diffie-Hellman seems to be lower.

\section{Acknowledgements}

I want to thank Steve Cook, Jan Kraj\' i\v cek and Neil Thapen for helpful suggestions and discussion, Emil Je\v
r\' abek for his comments and for answering my questions and the participants of the MALOA Special Semester in
Proof Complexity in Prague 2011 for enduring a sloppy and sometimes faulty exposition of this proof and still
coming up with helpful comments. I also want to thank the anonymous referees for pointing out various mistakes
and for giving interesting suggestions.
A similar construction can be extracted from \cite{Zam97} and leads to similar results, if perceived in the way we did it here. I want to thank Leszek Kolodziejczyk for pointing this out.


\end{document}